%% file: main.tex
\documentclass{article}

\include{macros}
\usepackage[figuresright]{rotating}

\begin{document}

\title{A search-to-decision reduction for the linear code equivalence problem}

\author{Jean-Fran\c{c}ois Biasse
 \orcidlink{0000-0001-8591-8408},
  \and
  Giacomo Micheli
 \orcidlink{0000-0002-5265-5207},
  \and
  Benjamin Prada
 \orcidlink{0009-0003-5329-5973},
  \and
  Philip Waitkevich
\orcidlink{0009-0008-6481-2100}
\thanks{
Department of Mathematics and Statistics, University of South Florida, Tampa,
FL 33620 USA. E-mail: \{biasse,gmicheli,bprada,phillipwaitkevich\}@usf.edu.}
 \thanks{An extended abstract of this paper was presented at the 2023 IEEE ISIT conference.}
}

%\IEEEpubid{0000--0000~\copyright~2023 IEEE}

%% Use \dochead if there is an article header, e.g. \dochead{Short communication}
%% \dochead can also be used to include a conference title, if directed by the editors
%% e.g. \dochead{17th International Conference on Dynamical Processes in Excited States of Solids}

%\title{}

%% use optional labels to link authors explicitly to addresses:
%% \author[label1,label2]{<author name>}
%% \address[label1]{<address>}
%% \address[label2]{<address>}

\date{}

\maketitle

\begin{abstract}
We present a polynomial-time reduction from the search variant of the linear code equivalence problem (i.e. the search for a linear isometry between the inputs) to its decisional variant. More precisely, given two linearly equivalent codes $\mathcal C_1,\mathcal C_2 \subseteq \mathbb{F}_q^n$, we show how to recover a linear isometry between them by making a polynomial number of queries to an oracle for decisional linear code equivalence.
\begin{comment}
In this paper, we present an algorithm to find a linear isometry between two input linear codes $\mathcal C_1,\mathcal C_2$ over $\F_q^n$ by making a polynomial number of queries to the decisional linear code equivalence problem (i.e. the problem of deciding if there is a linear isometry between two input codes). 
\end{comment}
First, we prove that search-Permutation Code Equivalence (search-PCE -- the problem of finding a permutation $\pi\in\mathcal S_n$ mapping $\mathcal C_1$ to $\mathcal C_2$) reduces in polynomial time to PCE (i.e. the problem of deciding if there is a permutation map from $\mathcal C_1$ to $\mathcal C_2$) via at most $n^2$ oracle calls on instances of dimension $k$ and length at most $n^2(n+1)/2$. We then extend this approach to linearly equivalent codes: we recover the permutation part of a linear isometry via at most $n^2$ calls to a Linear Code Equivalence (LCE) oracle on instances of the same size, and we give a deterministic polynomial-time algorithm to recover the diagonal part once this permutation is known. Altogether, this yields a polynomial-time procedure to recover a linear isometry from an oracle for decisional LCE. From a linear-algebraic perspective, our results provide an explicit reconstruction
of a monomial equivalence between two matrix representations from oracle access
to the corresponding orbit membership problem.
\begin{comment}
More specifically, we prove two statements. We show that there is a polynomial algorithm that finds a permutation between two permutationally equivalent codes of dimension $k$ over $\F_q^{n}$ by using at most $n^2$ calls to an oracle that decides if two codes of dimension $k$ and length at most $n^2(n+1)/2$ over $\F_q$ are permutationally equivalent. We then generalize this result to the case of linearly equivalent codes: We prove that there is a polynomial algorithm that finds a linear isometry between two linearly equivalent codes of dimension $k$ over $\F_q^{n}$ by using at most $n^2$ calls to an oracle that decides if two codes of dimension $k$ and length at most $n^2(n+1)/2$ over $\F_q$ are linearly equivalent.
\end{comment}
\end{abstract}

%%
%% Start line numbering here if you want
%%
% \linenumbers

%% main text
%\section{}
%\label{}

\input{Introduction}
\input{background}

\input{PCE}

\input{Permutation}
\input{Multipliers}

%% The Appendices part is started with the command \appendix;
%% appendix sections are then done as normal sections
%% \appendix

%% \section{}
%% \label{}

%% References
%%
%% Following citation commands can be used in the body text:
%% Usage of \cite is as follows:
%%   \cite{key}         ==>>  [#]
%%   \cite[chap. 2]{key} ==>> [#, chap. 2]
%%

%% References with BibTeX database:

\bibliographystyle{elsarticle-num}
\bibliography{ResearchOutline}

%% Authors are advised to use a BibTeX database file for their reference list.
%% The provided style file elsarticle-num.bst formats references in the required Procedia style

%% For references without a BibTeX database:

% \begin{thebibliography}{00}

%% \bibitem must have the following form:
%%   \bibitem{key}...
%%

% \bibitem{}

% \end{thebibliography}

\end{document}

%% file: macros.tex
\usepackage[utf8]{inputenc} % set input encoding (not needed with XeLaTeX)

\usepackage{geometry} % to  change the page dimensions
\usepackage{graphicx} % support the \includegraphics command and options

\usepackage{indentfirst} % Activate to add indent to first line after each section
\usepackage{booktabs} % for much better looking tables
\usepackage{array} % for better arrays (eg matrices) in maths
\usepackage{paralist} % very flexible & customisable lists (eg. enumerate/itemize, etc.)
\usepackage{verbatim} % adds environment for commenting out blocks of text & for better verbatim
\usepackage{subfig} % make it possible to include more than one captioned figure/table in a single float
\usepackage{multicol} % allow using multiple columns per page
\usepackage{amsthm}
\usepackage{amssymb}
\usepackage{amsmath}
\usepackage{mathtools}
\usepackage{color, colortbl}
\usepackage{url}
\usepackage{orcidlink}
\usepackage[boxed, linesnumbered, noend]{algorithm2e}
\usepackage{algorithmic}

\usepackage{fancyhdr} % This should be set AFTER setting up the page geometry
\usepackage{sectsty}
\allsectionsfont{\sffamily\mdseries\upshape} % (See the fntguide.pdf for font help)
\usepackage[nottoc,notlof,notlot]{tocbibind} % Put the bibliography in the ToC
\usepackage[titles,subfigure]{tocloft} % Alter the style of the Table of Contents

\newcounter{thmcount}
\newcounter{defcount}
\newcounter{remcount}
\newcounter{factcount}
\theoremstyle{plain}
\newtheorem{thm}[thmcount]{Theorem}
\newtheorem{lemma}[thmcount]{Lemma}
\newtheorem{corollary}[thmcount]{Corollary}
\newtheorem{proposition}[thmcount]{Proposition}
\newtheorem{fact}[factcount]{Fact}

\newtheorem{definition}[defcount]{Definition}
\newtheorem{rmk}[remcount]{Remark}

\DeclarePairedDelimiter{\gen}{\langle}{\rangle}

\newcommand{\F}{\mathbb{F}}
\newcommand{\C}{\mathcal{C}}
\newcommand{\Sn}{\mathcal{S}}

\newcommand{\permequiv}{\stackrel{P}{\approx}}
\newcommand{\diag}{\text{diag}}

%% file: Introduction.tex
\section{Introduction}

Given two linear codes $\mathcal{C}_1$ and $\mathcal{C}_2$ of length $n$  over 
a finite field $\F_q$, the \textit{Permutation Code Equivalence} problem (PCE) consists in finding a permutation map $\pi\in\mathcal{S}_n$ such that $\pi(\mathcal{C}_1) = \mathcal{C}_2$. If we are given the image of a basis of $\mathcal{C}_1$ via $\pi$, this problem is computationally easy. On the other hand, if we only know an arbitrary basis of $\pi(\mathcal{C}_1)$, then there is no known efficient algorithm for solving general instances of PCE. In most of the literature, PCE is phrased in its decisional variant: given $\mathcal{C}_1$ and $\mathcal{C}_2$, decide whether there is $\pi\in\mathcal{S}_n$ such that $\mathcal{C}_2 = \pi(\mathcal{C}_1)$. On the other hand, the search variant of PCE requires us to find $\pi$ such that $\pi(\mathcal{C}_1) = \mathcal{C}_2$. In the following, we call these two variants PCE and search-PCE. PCE can be generalized to general linear isometries between codes. The Linear Code Equivalence Problem (LCE) is the task of deciding if there is a linear isometry $\tau\in \F_q^{*n}\rtimes \mathcal{S}_n$ such that $\tau(\mathcal{C}_1) = \mathcal{C}_2$, while the most generic variant of the code equivalence problem consists in deciding if there is a field isomorphism $\alpha\in \operatorname{Aut}(\F_q)$ and $\tau\in \F_q^{*n}\rtimes \mathcal{S}_n$ such that 
$\tau(\alpha(\mathcal{C}_1)) = \mathcal{C}_2$. In the rest of this paper, we focus on PCE and LCE. Interestingly, the available algorithms for the resolution of the most generic instances of the code equivalence problem actually do it by solving the search variant. From a linear-algebraic point of view, the problem is to reconstruct a monomial transformation between two matrix representations of equivalent codes; our arguments exploit repeated or proportional column structure and, in the linear case, the behavior of supports under row reduction. In particular, the code equivalence problem can be viewed as a matrix equivalence
problem under the action of the monomial group, combining left multiplication by
$\mathrm{GL}_k(\mathbb{F}_q)$ and right multiplication by permutation and diagonal
matrices. Our reductions exploit structural properties of this action, such as
the behavior of repeated or proportional columns and the invariance of row-reduced
forms, to reconstruct a representative of the equivalence class. In this paper, we show that search-PCE can be efficiently solved by using an oracle to PCE, and that search-LCE can be efficiently solved by using an oracle to LCE.

\paragraph{\textbf{The Code Equivalence problem in cryptography}} The code equivalence problem intersects cryptography based on error correcting codes (a.k.a ``code-based cryptography''). In the McEliece cryptosystem~\cite{mceliece1978public}, the private key is a code $\mathcal{C}_1$, and the public key is $\mathcal{C}_2 = \pi(\mathcal{C}_1)$ where $\pi$ is a secret permutation. Note that being able to solve PCE (or search-PCE) does not imply being able to break the McEliece cryptosystem because in this case, we are only given $\mathcal{C}_2$. If potential candidates for $\mathcal{C}_1$ are given to us (for example in the context of a key exposure attack), then we can decide if any of them is the private key by solving PCE, but in general there are too many possible secret keys to efficiently reduce the cryptanalysis of McEliece to PCE (or search-PCE). 

A Zero-Knowledge protocol due to Girault~\cite{Girault} first used 
PCE as a hardness assumption. This identification scheme both relies on 
the hardness of the binary syndrome decoding problem and of search-PCE. This scheme was later improved by Sendrier and Simos~\cite[Sec. 4.2]{Sendrier-Simos}. Biasse, Micheli, Persichetti and Santini described the  identification (and digital signature) scheme LESS~\cite{LESS} which uses search-PCE (or search-LCE) as 
its sole security assumption (through a rigorous security proof). From a high level 
perspective, the task of solving search-PCE (resp. search-LCE) (and thus breaking LESS) can be viewed as inverting the group action of $\mathcal{S}_n$ (resp. $\F_q^{*n}\rtimes \mathcal{S}_n$) on the set of linear codes equivalent to a given code $\mathcal{C}_1$. From this perspective, cryptography based on the hardness of search-PCE and search-LCE falls under the umbrella of ``hard homogeneous spaces'' described by Couveignes~\cite{Couveignes06}, and ``Cryptographic Group Actions'' (see~\cite{AlamatiFMP20}). Hence, many results applying to this family of schemes, including follow-up works from the isogeny-based scheme CSIDH~\cite{CSIDH}, can be used to enhance code equivalence-based cryptosystems. This way, practical improvements to LESS were described in~\cite{LESS-FM}, as well as advanced signature functionalities (ring signatures, identity based signatures) enabled by the cryptographic group action framework~\cite{LESS-advanced}.

\paragraph{\textbf{Prior work on the resolution of the code equivalence problem}} In 1982, Leon~\cite{Leon1982} described an algorithm to solve search-PCE that relies on the following observation: since $\pi$ preserves the Hamming weight of a codeword, we can attempt to draw pairs $c,\pi(c)$ by computing the sets of weight-$w$ codewords in $\mathcal{C}_1$ and $\mathcal{C}_2$ and matching them when they have the same multisets of entries. With enough valid pairs, $\pi$ can be recovered. The complexity of Leon's algorithm is exponential, but this approach seems to offer the best performance in practice in the general case. Since the publication of cryptosystems based on the code equivalence problem, some follow-up works brought practical improvements to Leon's method~\cite{CEcrypto,Beullens20} to solve search-PCE and search-LCE without changing the overall complexity class. 

Besides Leon's algorithm and related works, algorithms exist to solve search-PCE efficiently for codes of small hull (the intersection of a code with its dual). For example, Sendrier's Support Splitting Algorithm (SSA~\cite{Sendrier-SSA})  relies on the concept of a signature function that takes a code as input, and returns the same value for 
two equivalent codes. Sendrier's signature function is based on the \textit{weight enumerator function} whose calculation takes time $\tilde{O}\left( n^3 + q^h n^2\right)$ where $h$ is the dimension of the hull. In the case of a hull of small dimension, this algorithm can be efficient, however, if a code is weakly self-dual, then the complexity is exponential. Note that SSA can be extended to the resolution of search-LCE by considering 
the \textit{closure} of the input codes and applying SSA. However, when $q\geq 5$, the closure of a code is always 
weakly self-dual, which means that for the vast majority of instances of search-LCE, SSA has 
exponential complexity. The 
work of Bardet, Otmani, and Saheed-Taha~\cite{BardetOS19} shows that when $h=0$, an instance of search-PCE efficiently reduces to an 
instance of the Weighted Graph Isomorphism (WGI) problem. The resulting complexity is in $O\left( n^{2.373}\text{C}_{\text{WGI}}(n)\right)$, where $\text{C}_{\text{WGI}}(n)$ denotes the cost of solving WGI in a graph of size $n$, which can be done in quasi-polynomial time with Babai's algorithm~\cite{Babai16}. An algebraic approach was 
also investigated in~\cite{Saheed-Thesis} where the code equivalence problem was reduced to the resolution of a system of quadratic equations. With trivial hull, the system can be solved efficiently via the \textit{block linearization} technique, but in the case of an arbitrary hull, the complexity appears to be exponential. The deterministic algorithm with the best asymptotic run time to solve PCE is due to Babai~\cite{babai2011code}, and it runs in time $2^{n + o(n+q)}$. This work was generalized by Bennett et al.~\cite{ISIT25_CE} who presented a $2^{n+o(n+q)}$-time deterministic algorithm for search-LCE, as well as randomized $2^{n/2 + o(n+q)}$-time algorithms for search-PCE and search-LCE, and $2^{n/3 + o(n+q)}$-time quantum algorithms for search-PCE and search-LCE.  

In terms of complexity theory, it is worth noting that code equivalence is unlikely to be $\operatorname{NP}$-complete. More specifically, Petrank and Roth showed in~\cite{PetrankR97} that if code equivalence were $\operatorname{NP}$-complete, then the polynomial hierarchy would collapse at the first level, which is widely believed to be untrue. In the same paper, they showed that the Graph Isomorphism (GI) problem efficiently reduced to PCE. Until Babai's groundbreaking work on GI~\cite{Babai16}, this was often used to justify the alleged hardness of PCE. However, this is no longer relevant to gauge the computational hardness of PCE (except in the very special case of $h=0$, an efficient solution to GI does not seem to imply an efficient solution to PCE). 

Finally, it is worth noting that generalizations of the code equivalence problem to rank metric codes have recently received some attention. 
Recent work proposed optimized methods for the resolution of the code equivalence problem in rank metric codes~\cite{Couvreur21,Reijnders22}, 
and Chou et al. suggested the use of this problem as the hardness assumption for a new identification scheme~\cite{MEDS}.

%Mention complexity results, connection to GI. 

\paragraph{\textbf{Our contribution}}
\begin{comment}In this paper, we prove the following reductions from search PCE to PCE and search LCE to LCE respectively.
\begin{thm}
 There is a polynomial time algorithm to solve search-PCE on input $[n,k,d]_q$ codes that uses at most $n^2$ calls to an oracle to solve PCE on input codes of dimension $k$ and length at most $n^2(n+1)/2$ over $\F_q$.
\end{thm}

\begin{thm}\label{th:PCE}
 There is a polynomial time algorithm to solve search-LCE on input $[n,k,d]_q$ codes that uses at most $n^2$ calls to an oracle to solve LCE on input codes of dimension $k$ and length at most $n^2(n+1)/2$ over $\F_q$.
\end{thm}
The above statements mean that a given instance $\mathcal{C}_1,\mathcal{C}_2$ of search-PCE (resp. search LCE) can be 
solved by a polynomial number of calls to an oracle that solves PCE (resp. LCE). Search-to-decision reductions are known 
for many computational problems, including the Graph Isomorphism problem~\cite[Chap. 1]{KST93}, which is related to 
PCE. However, despite extensive prior work on the resolution of the code equivalence problem, no reduction from search-PCE to PCE or from search -LCE to LCE was 
known. 
\end{comment}

Despite these advances, all existing algorithms for search-PCE and search-LCE operate independently of any oracle for their decisional counterparts. A natural question -- standard in complexity theory but largely unaddressed for code equivalence -- is whether a solver for the decisional problem can be leveraged to solve the search problem. We answer this affirmatively.

First, we prove that search-PCE reduces to PCE with a polynomial number of oracle calls.

\begin{thm}\label{th:main-PCE}
There is a polynomial-time algorithm to solve search-PCE on input $[n,k,d]_q$ codes that uses at most $n^2$ calls to an oracle for PCE on input codes of dimension $k$ and length at most $n^2(n+1)/2$ over $\mathbb F_q$.
\end{thm}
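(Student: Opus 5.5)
The plan is to recover $\pi$ one coordinate at a time, using the PCE oracle to test partial assignments $i\mapsto j$. Each test is encoded by attaching to the two codes a rigid gadget of repeated columns. The gadget is designed so that any permutation equivalence of the modified codes must send the marked column of $\mathcal C_1$ to the marked column of $\mathcal C_2$.

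\textbf{Marking mechanism.} Fix generator matrices $G_1,G_2\in\F_q^{k\times n}$ of $\mathcal C_1,\mathcal C_2$. To mark a column we replicate it. Permutations preserve the multiset of columns, so a column that appears with a distinguishable multiplicity must be sent to a column with that same multiplicity. Concretely, we fix once and for all the assignments already determined, say $i_1\mapsto j_1,\dots,i_{t-1}\mapsto j_{t-1}$, and then test a candidate $i_t\mapsto j$. We append extra copies of column $i_s$ of $G_1$ and of column $j_s$ of $G_2$, with multiplicity depending on $s$.

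The natural difficulty is that the original codes may already contain repeated columns. Then the total multiplicity of a column is not simply the number of copies we appended. To handle this, let $m\le n$ bound the original multiplicities. We give the $s$-th marked column $s\cdot n$ extra copies, or use any increasing scheme whose gaps exceed $n$. Then the total multiplicity of each column class in the padded code determines which marker, if any, it carries. Summing the multiplicities over $s\le n$ gives length $n+\sum_{s\le n} s n=n+n^2(n+1)/2$, in line with the bound $n^2(n+1)/2$ of the statement up to the exact bookkeeping. The dimension stays $k$, since we only duplicate columns.

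\textbf{Key lemma.} Let $G_1',G_2'$ be the padded matrices. We must show that the padded codes are permutation equivalent if and only if some $\sigma$ with $\sigma(\mathcal C_1)=\mathcal C_2$ satisfies $\sigma(i_s)=j_s$ for all $s\le t$.

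For "if": extend $\sigma$ to the appended copies in the obvious way.

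For "only if": a permutation of coordinates maps the padded code onto the padded code. Up to the row operation $A\in\mathrm{GL}_k$, it maps the column multiset of $G_1'$ to that of $G_2'$, with $G_2'=AG_1'P$. Column classes (equal columns) go to equal column classes with equal multiplicity. By the gap construction, the class containing the $s$-th marker goes to the class containing the $s$-th marker. Removing the appended copies, we can rearrange within classes, which is free because those columns are identical. This yields a permutation $\sigma$ of the original $n$ coordinates with $AG_1P_\sigma=G_2$ and $\sigma(i_s)=j_s$.

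\textbf{Algorithm.} For $i=1,\dots,n$ we try each $j$ not yet used and query the oracle on the padded codes. Since at least one valid extension exists at each stage, some query succeeds, and we fix the first $j$ that succeeds. This takes at most $n$ queries per coordinate, so at most $n^2$ in total. The final $\pi$ maps $\mathcal C_1$ to $\mathcal C_2$, which can be checked in polynomial time. All the constructions are polynomial.

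\textbf{Main obstacle.} The step I expect to be hardest is the rigidity argument in the presence of pre-existing repeated columns, and also zero columns. Columns of $G_1$ that are equal to each other are interchangeable, so the marker really identifies a column class rather than a single coordinate. I will argue that this is harmless, since any choice within the class gives a valid $\sigma$. The multiplicity gaps must be chosen large enough, and the length must still be kept within $n^2(n+1)/2$. I would first try multiplicities $s\cdot n$ and then adjust the counting to meet the stated bound.
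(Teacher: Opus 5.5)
Your marking mechanism has a genuine gap, and it makes your key lemma false. You mark \emph{every} processed coordinate $i_s$ with $s\cdot n$ copies, even when $(G_1)_{i_s}$ equals a column that already carries a marker. A column class then has total multiplicity equal to its original multiplicity plus the \emph{sum} of its markers' multiplicities, and such sums collide, e.g.\ $n+2n=3n$. Gaps larger than $n$ do not prevent this. You would need all subset sums to be distinct, and with something like $n2^s$ copies the length becomes exponential.

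Concretely, take $k=1$ over $\F_4$, with $G_1=(1,1,\omega,\omega,\omega^2,\omega^2)$ and $G_2=(1,1,\omega^2,\omega^2,\omega,\omega)$. Every $\rho$ with $\rho(\mathcal C_1)=\mathcal C_2$, $\rho(1)=1$ and $\rho(2)=2$ sends $3$ into $\{5,6\}$. Your procedure fixes $1\mapsto 1$ and $2\mapsto 2$, then tests $3\mapsto 3$. The padded $G_1'$ contains the entries $1,\omega,\omega^2$ with multiplicities $20,20,2$. The padded $G_2'$ contains $1,\omega^2,\omega$ with multiplicities $20,20,2$. Multiplication by $\omega^2$ carries one onto the other, so the oracle answers yes. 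The class $\{1,2\}$, carrying markers $1$ and $2$, is matched with the class carrying marker $3$. The wrong assignment $3\mapsto 3$ is then fixed, and at $i=4$ no candidate passes.

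So both your ``only if'' direction and the invariant ``some query succeeds'' break down. Your remark that a marker identifies a class rather than a coordinate does not address this. The problem is not ambiguity inside a class but several markers accumulating on one class.

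The missing idea is to place \emph{at most one marker on each column class}, which is what the paper does. When the next coordinate $t$ has $G_t$ equal to an already marked column $G_{t_h}$, no oracle call is made. Instead, $t$ is sent to any unused index $j$ with $F_j=F_{P(t_h)}$, which is legitimate because equal columns are interchangeable (Lemma~\ref{lemma:aux}). Only a column from a new class triggers queries. The $h$-th distinct marked class receives $hm$ copies, with $m$ the maximal original multiplicity. Marked totals then lie in the pairwise disjoint ranges $[hm+1,(h+1)m]$, and unmarked ones lie in $[1,m]$. Each marked class is uniquely identified, and your rigidity argument goes through.

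This also repairs your length count, which overshoots at $n+n^2(n+1)/2$. Only the at most $n-m+1$ distinct classes are marked, with unit $m$. The length is therefore at most $n+m(n-m+1)(n-m+2)/2\le n^2(n+1)/2$ for $n\ge 2$.
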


Second, we prove that search-LCE reduces to LCE in the sense that one can recover the permutation part of a linear isometry with a polynomial number of oracle calls.

\begin{thm}\label{th:main-LCE}
There is a polynomial-time algorithm which, on input two linearly equivalent $[n,k,d]_q$ codes, recovers a permutation $\pi \in S_n$ such that there exists $v \in \mathbb F_q^{\ast n}$ with $\tau=(\pi,v)$ satisfying $\tau(C_1)=C_2$, using at most $n^2$ calls to an oracle for LCE on input codes of dimension $k$ and length at most $n^2(n+1)/2$ over $\mathbb F_q$.
\end{thm}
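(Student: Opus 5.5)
We determine $\pi$ one coordinate at a time. At each step we test a candidate assignment $i\mapsto j$ by modifying both codes so that any linear isometry between the modified codes must send coordinate $i$ to coordinate $j$, and then we ask the oracle whether the modified codes are equivalent.

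Fix generator matrices $G_1,G_2\in\F_q^{k\times n}$ of $\C_1,\C_2$. To mark a coordinate $i$ of $\C_1$ with multiplicity $m$, we append $m$ extra copies of column $i$ to $G_1$. This does not change the dimension $k$.

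The key invariant is the number of columns proportional to a given column. Under a monomial map $\tau=(\pi,v)$ together with left multiplication by $S\in\mathrm{GL}_k(\F_q)$, columns are permuted and rescaled by nonzero scalars. Hence the partition of columns into classes of pairwise proportional columns, and the size of each class, are preserved. (Zero columns form their own class and are handled the same way.)

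We use distinct multiplicities for the successive coordinates. The $t$-th coordinate fixed so far gets a marker of size roughly $t\cdot n$ or larger, so its marked class is strictly larger than any class in the unmodified code, which has size at most $n$, and larger than every earlier marker. The sizes of the marked classes then pin down which marked class maps to which. With markers of sizes $n,2n,\dots$ (or the triangular sizes $1+2+\cdots$ that give the bound $n^2(n+1)/2$), the total length stays at most $n+\sum_{t\le n} tn\le n^2(n+1)/2$.

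The algorithm then runs as follows. Suppose we have already found $j_1,\dots,j_{s-1}$ such that some isometry $\tau$ satisfies $\pi(i_t)=j_t$ for each $t<s$. For the next coordinate $i_s$, we try each candidate $j\notin\{j_1,\dots,j_{s-1}\}$. For that candidate we form the padded codes $\C_1^{(j)}$ and $\C_2^{(j)}$: in $\C_1$ we mark $i_t$ with multiplicity $m_t$ for $t\le s$, and in $\C_2$ we mark $j_t$ (with $j_s=j$) with the same multiplicities $m_t$. We then query the LCE oracle on this pair.

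We need two claims.

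\textbf{Soundness.} If the oracle answers yes, then some isometry of the original codes respects all the marks. A monomial isometry between the padded codes must send the class of size $m_t+(\text{original class size})$ containing $i_t$ onto a class of the same size. Because the sizes $m_t$ are chosen with gaps larger than $n$, that target class must be exactly the marked class of $j_t$. Restricting to one representative per marked column then gives an isometry of the original codes. It maps $i_t$ into the proportionality class of $j_t$ in $\C_2$, and composing with a swap of proportional columns, which is itself a monomial map, lets us force $i_t\mapsto j_t$ exactly.

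\textbf{Completeness.} If an isometry $\tau$ with $\pi(i_t)=j_t$ for all $t\le s$ exists, we extend it to the padded codes by sending each copy to the corresponding copy, with the same scalar. So the oracle answers yes for at least one candidate $j$.

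Each step uses at most $n$ queries, so the whole procedure uses at most $n^2$ queries. After $n$ steps we have $\pi$ completely. Every step is polynomial time.

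The main obstacle is the soundness argument, and specifically the restriction step. Columns of $\C_1$ that happen to be proportional to $i_t$ are merged with the marked class. So we must show carefully that the monomial map between the padded codes induces a monomial map between the original codes, that is, that it can be restricted to $n$ columns while still matching the marked coordinates. My plan is to argue class by class. The map is a bijection between proportionality classes that preserves their sizes. Within a class we are free to choose which columns correspond, since any columns in the same class differ only by a scalar, and the vector $v$ absorbs that scalar. So we can choose representatives that leave exactly the original columns, with $i_t$ sent to $j_t$. The same restriction argument specialized to $v=\mathbf 1$ gives Theorem~\ref{th:main-PCE}. For the PCE case, "proportional" is replaced by "equal", and the marker sizes are chosen the same way.
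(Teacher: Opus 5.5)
Your soundness argument fails because you mark \emph{every} previously fixed coordinate $i_1,\dots,i_s$. When two fixed coordinates lie in the same proportionality class, their markers merge. A padded class then has size $\mu+\sum_{t\in B}m_t$ for a whole block $B$ of marked indices. With polynomially bounded markers such as $m_t=tn$, different blocks can have equal sums, and your claim that ``that target class must be exactly the marked class of $j_t$'' breaks down.

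Here is a concrete failure. Take $k=2$, $n=6$, and let $G$ have columns $e_1,e_2,e_2,e_1,e_1,e_2$. Let $F$ be $G$ with coordinates $5,6$ swapped, so its columns are $e_1,e_2,e_2,e_1,e_2,e_1$. Process coordinates in the order $1,\dots,6$. The assignments $j_t=t$ for $t\le 4$ are consistent. Now test $5\mapsto 5$:
\begin{itemize}
\item padded $G$ has $3+(1+4+5)n$ copies of $e_1$ and $3+(2+3)n$ copies of $e_2$;
\item padded $F$ has these two counts exchanged.
\end{itemize}
So swapping $e_1$ and $e_2$ gives an equivalence, and the oracle answers yes. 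Yet $G_1\sim G_5$ while $F_1\not\sim F_5$, so no isometry sends $1\mapsto 1$ and $5\mapsto 5$. If candidates are tried in increasing order, step $6$ also passes and the output is the identity, which is not valid. Only the coincidence $1+4=2+3$ was used. Ruling out such coincidences while still marking all coordinates would essentially require markers with distinct subset sums, hence exponential length.

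The missing idea, which is what Algorithm~\ref{alg:perm-to-LCE} does, is to mark only \emph{one representative per proportionality class}. The $h$-th distinct class gets $hm$ copies, where $m$ is the maximal class size. Marked classes then have sizes in the disjoint windows $(hm,(h+1)m]$, and unmarked ones have size at most $m$.

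A coordinate proportional to an already fixed one is assigned \emph{without} an oracle call. By Remark~\ref{rmk:mapsColumnsToColumns}, its class maps onto the class of the known image, which therefore still has an unused index. Lemma~\ref{lem:newpermutation} lets you send the coordinate there while preserving earlier choices.

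With this change your restriction argument goes through. Moreover, a candidate $j$ lying in an already marked class of $F$ is automatically rejected, since its padded class is larger than every class on the $G$ side. (Minor: $n+\sum_{t\le n}tn$ exceeds $n^2(n+1)/2$ by $n$, so your length count also needs tightening.)
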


Finally, given such a permutation $\pi$, we prove that the diagonal part $v$ can be recovered deterministically in polynomial time. This yields a polynomial-time procedure to recover a full linear isometry from an oracle for LCE. The deterministic reconstruction of the diagonal
part of a monomial transformation from linear-algebraic data is a key technical contribution of this work. While a standard
approach based on solving linear systems is known, it is inherently heuristic
in general. We replace it with a method based on row-reduced echelon forms and
connectivity properties of column supports, yielding a fully deterministic
procedure.

\begin{comment}
This paper is organized as follows: Section~\ref{sec:background} introduces the necessary background and notations. Section~\ref{sec:PCE-red} proves Theorem~\ref{th:PCE}, i.e. the reduction from search-PCE to PCE. Then Section~\ref{sec:LEC-perm-search} shows, given two codes $\mathcal C_1,\mathcal C_2$ with $\mathcal C_2 = \tau(\mathcal C_1$ for some $\tau\in \F_q^{*n}\rtimes \mathcal{S}_n$, how to find a permutation $\pi\in\mathcal S_n$ such that there is $\tau' = (v,\pi)\in \F_q^{*n}\rtimes \mathcal{S}_n$ with $\mathcal{C}_2 = \tau'(\mathcal C_1)$. Finally, Section~\ref{sec:diagonal-entries} shows how to recover $v$ given $\pi$. The results from Section~\ref{sec:PCE-red} were presented at the 2023 IEEE International Symposium on Information Theory (ISIT 23), and appeared in a 5p extended abstract published in the ISIT 23 proceedings. 
\end{comment}

This paper is organized as follows. Section~\ref{sec:background} introduces the necessary background and notation. Section~\ref{sec:PCE-red} proves Theorem~\ref{th:main-PCE}, namely the reduction from search-PCE to PCE. Section~\ref{sec:LEC-perm-search} proves the corresponding reduction for LCE at the level of the permutation part: given two linearly equivalent codes, it shows how to recover a permutation $\pi \in S_n$ for which there exists $v \in \mathbb F_q^{\ast n}$ such that $(\pi,v)$ maps one code to the other. Section~\ref{sec:diagonal-entries} then addresses the recovery of the diagonal part. After recalling a folklore heuristic approach and a simpler projective-frame-based special case for context, we present a new deterministic polynomial-time method to recover $v$ from $\pi$, thereby completing the reconstruction of a full linear isometry.

\paragraph{\textbf{Relation to prior work.}}
A preliminary version of this work appeared as an extended abstract in the
proceedings of ISIT~2023~\cite{Reduction}. That version establishes the
search-to-decision reduction for permutation code equivalence, corresponding to
the results of Section~3 of the present paper. The current manuscript extends
these results in several directions. In particular, we develop a reduction for
linear code equivalence, which requires additional techniques to recover the
diagonal part of the transformation once the permutation is known. This leads
to the deterministic reconstruction method of Section~5, which has no analogue
in the conference version.

%% file: background.tex
\section{Background}\label{sec:background}

\begin{comment}
We set new terms to easily distinguish between types of code equivalence problems.

\begin{definition}[PCE]
Permutation Code Equivalence (PCE) is the problem of deciding whether two linear codes are permutationally equivalent given their generator matrices.
\end{definition}

\begin{definition}[LCE]
Linear Code Equivalence (LCE) is the problem of deciding whether two linear codes are linearly equivalent given their generator matrices.
\end{definition}

\begin{definition}[search-PCE]
Search-PCE is the extension of PCE which, upon permutation equivalence, requires returning the permutation that realizes the equivalence between the two linear codes.
\end{definition}

\begin{definition}[search-LCE]
Search-LCE is the extension of LCE which, upon linear equivalence, requires returning the linear isometry that realizes the equivalence between the two linear codes.
\end{definition}
\end{comment}

In this section, we recall essential facts about linear codes and the code equivalence problem, and we specify relevant notations. An $[n,k,d]_q$ code $\mathcal{C}$ is a $k$-dimensional $\F_q$-vector space in $\F_q^n$ such that
the smallest Hamming weight of a non-zero $c\in\mathcal{C}$ is $d$. A matrix $G$ of $\F_q^{k\times n}$ whose rows span $\mathcal{C}$ is called a \textit{generator matrix} for $\mathcal{C}$. Let us denote by $\operatorname{GL}_k(\F_q)$ the invertible matrices of dimension $k$ over $\F_q$. Given a generator matrix
$G\in\F_q^{k\times n}$ of $\mathcal{C}$, and $S\in\operatorname{GL}_k(\F_q)$, the matrix $SG$ is also a generator matrix for $\mathcal{C}$. We denote by $G_i$ the $i$-th column of $G$. Additionally, we denote a sequence of matrices with indices in superscripts: $A^{(1)},A^{(2)},\ldots,A^{(\ell)}$. If $D$ is a diagonal matrix, $D(i)$ denotes the coefficient $D_{i,i}$.

\begin{definition}[Permutation matrix]
Let $\pi\in\mathcal{S}$. The permutation matrix $P\in\F_q^{n\times n}$ satisfying  $P_{j,i}=1$ if $\pi(i)=j$ and $P_{i,j}=0$ otherwise acts via a right multiplication by permuting columns according to $\pi$. More specifically, if $A,B\in\F_q^{k\times n}$, and $B = AP$, then the column of index $\pi(i)$ of $B$ is the column of index $i$ of $A$.
\end{definition}

 Given the natural correspondence between 
permutations and their associated permutation matrices, we use the notation $vP$ to denote the vector-matrix multiplication between the row vector $v$ and the permutation matrix $P$, as well as $P(i)$ to denote the image of $i\in\{1.\ldots,n\}$ by the permutation associated with $P$. When $P\in\{0,1\}^{n\times n}$ is a permutation matrix, we say that $P\in\Sn_n$. We also define the \textit{support of a permutation} as the set of indexes where it does not act like the identity. With the above definition, two linear codes $\C_1,\C_2$ with generator matrices $G_1,G_2$ are permutationally equivalent if and only if there exist an invertible matrix $S$ and a permutation matrix $P$ such that $G_2 = SG_1P$. 

\begin{definition}[PCE]
Let $\C_1,\C_2$ be $[n,k,d]_q$ codes. The Permutation Code Equivalence problem (PCE) is the task of deciding if there is  $\pi\in\mathcal{S}_n$ such that $\C_2 = \pi(\C_1)$. Equivalently, if $G_1,G_2$ are generator matrices for $\C_1$ and $\C_2$, PCE is the task of deciding if there is an invertible matrix $S$ and a permutation matrix $P$ such that $G_2 = SG_1P$.
\end{definition}

In many references, the permutation code equivalence problem is formulated in its decisional variant (like above). The search variant (denoted search-PCE) consists in finding the permutation $\pi$ such that $\C_2 = \pi(\C_1)$, or equivalently finding $S,P$ such that $G_2 = SG_1P$. When $G_1$ and $G_2$ are generator matrices of permutationally equivalent codes, we denote this by $G_1\permequiv G_2$. In the rest of the paper, we often say that such a $P$ \textit{realizes the equivalence} between $G_1$ and $G_2$. Note that such a $P$ is not in general unique. When the automorphism groups of the codes $\C_1$ and $\C_2$ are not trivial (i.e. there exist $P\neq I$ and $S\in\operatorname{GL}_k(\F_q)$ such that $SG_1P=G_1$), the $P$'s realizing the equivalence between $G_1$ and $G_2$ are a coset of the automorphism group of $\C_1$. 
Permutation equivalence can be extended to codes that are image of each other by a linear isometry. Two such codes $\C_1,\C_2$ are said to be linearly equivalent, and this property is denoted by $\C_1 \stackrel{L}{\approx} \C_2$. 

\begin{proposition}
All linear isometries of $\F_q^n$ can be uniquely identified by $(\sigma,v)\in \mathcal{S}_n\rtimes \F_q^{*n}$. The image $y\in\F_q^n$ of $x\in\F_q^n$ is given by 
$$
y_{\sigma(i)} = v_{\sigma(i)}x_i.
$$
\end{proposition}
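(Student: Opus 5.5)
The plan is to identify the group of linear isometries of $\F_q^n$ with the monomial group, that is, the group of $n\times n$ matrices having exactly one nonzero entry in each row and each column, and then factor each such matrix uniquely into a diagonal matrix and a permutation matrix. By definition (following the standard Hamming-metric convention), a linear isometry is an $\F_q$-linear map $f:\F_q^n\to\F_q^n$ that preserves Hamming weight. The proof therefore has three parts: show that weight-preserving linear maps are exactly monomial maps, show that the pair $(\sigma,v)$ attached to such a map is unique, and check that the composition law is the semidirect product.

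\textbf{Step 1.} Let $e_1,\dots,e_n$ be the standard basis. Since $\mathrm{wt}(f(e_i))=1$, we can write $f(e_i)=v_{\sigma(i)}e_{\sigma(i)}$ for some index $\sigma(i)$ and some $v_{\sigma(i)}\in\F_q^*$. Next I show $\sigma$ is injective. Suppose $\sigma(i)=\sigma(j)$ with $i\neq j$. Then for a suitable $\lambda\in\F_q^*$ the vector $f(e_i+\lambda e_j)$ vanishes at coordinate $\sigma(i)$, so it has weight at most $1$. But $e_i+\lambda e_j$ has weight $2$, which contradicts weight preservation (in fact even the weaker fact that $f$ is injective already gives the contradiction). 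Hence $\sigma\in\Sn_n$. By linearity,
\[
f(x)=\sum_i x_i v_{\sigma(i)} e_{\sigma(i)},
\]
which is exactly the rule $y_{\sigma(i)}=v_{\sigma(i)}x_i$. Conversely, any map given by this rule with $v\in\F_q^{*n}$ and $\sigma\in\Sn_n$ preserves supports up to relabeling, so it preserves weight.

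\textbf{Step 2 (uniqueness).} Evaluating $f$ on the basis vector $e_i$ recovers both $\sigma(i)$ (the position of the unique nonzero entry) and $v_{\sigma(i)}$ (its value). So the pair $(\sigma,v)$ is determined by $f$, and the correspondence between isometries and pairs is a bijection.

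\textbf{Step 3 (group structure).} The remaining, mildly delicate, step is to check that composition matches the semidirect product law, so that the notation $\mathcal S_n\rtimes\F_q^{*n}$ is justified. I would compute the composite
\[
(\sigma,v)\circ(\sigma',v')=(\sigma\sigma',\, w),\qquad w_{\sigma\sigma'(i)}=v_{\sigma\sigma'(i)}\,v'_{\sigma'(i)},
\]
that is, $w$ equals $v$ multiplied coordinatewise by $v'$ after $v'$ has been permuted by $\sigma$. This shows that $\F_q^{*n}$ (the maps with $\sigma=\mathrm{id}$) is a normal subgroup on which $\Sn_n$ acts by permuting coordinates. The only care needed is keeping the indexing convention straight, since $v$ is indexed by target coordinates.
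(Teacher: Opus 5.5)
Your proof is correct. The paper gives no proof of this proposition (it is stated as a standard background fact), and your argument is the classical one: basis vectors map to weight-one vectors, injectivity forces $\sigma\in\mathcal{S}_n$, linearity gives $y_{\sigma(i)}=v_{\sigma(i)}x_i$, and reading off $f(e_i)$ gives uniqueness. One cosmetic fix: write $f(e_i)=a_i e_{\sigma(i)}$ until you have shown $\sigma$ is injective, since $v_{\sigma(i)}$ is not well defined before that; also, in that injectivity step the vector $f(e_i+\lambda e_j)$ is actually $0$, not merely of weight at most $1$.
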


Hence, an isometry necessarily acts via a permutation of the columns, followed by the scaling of the entries by non-zero scalars. From a matrix point of view, a linear isometry $\tau = (\sigma,v)\in \mathcal{S}_n\rtimes \F_q^{*n}$ acts on the columns of $A\in\F_q^{k\times n}$ via right multiplication by the permutation matrix $P$ corresponding to $\sigma$, followed by the right multiplication by the diagonal matrix $D = \diag(v_1,\ldots,v_n)$: $APD$. A matrix of the form $PD$ is a \emph{monomial matrix}.

\begin{definition}[LCE]
Let $\C_1,\C_2$ be $[n,k,d]_q$-linear codes. The Linear Code Equivalence problem (LCE) is the task of deciding if there exists $\tau\in \mathcal{S}_n\rtimes \F_q^{*n}$ such that $\C_2 = \tau(\C_1)$.  Equivalently, if $G_1,G_2$ are generator matrices for $\C_1$ and $\C_2$, LCE is the task of deciding if there is an invertible matrix $S$, a permutation matrix $P$, and a diagonal matrix $D$ such that $G_2 = SG_1PD$. 
\end{definition}

As for permutation code equivalence, LCE is stated as the decisional variant of the linear code equivalence, while we denote by search-LCE the problem of finding $\tau\in \mathcal{S}_n\rtimes \F_q^{*n}$ such that $\C_2 = \tau(\C_1)$, or equivalently, finding $S,P,D$ such that $G_2 = SG_1PD$.

%\paragraph{\textbf{Reduced row echelon form.}}
\begin{definition}[Reduced row echelon form (RREF)]
Let $A\in \mathbb{F}_q^{m\times n}$. We say that $A$ is in \emph{reduced row echelon form} (RREF) if the following conditions hold:
\begin{enumerate}
    \item every nonzero row of $A$ has a leftmost nonzero entry equal to $1$;
    \item the leftmost nonzero entry of each nonzero row lies strictly to the right of the leftmost nonzero entry of the preceding nonzero row;
    \item each column containing a leftmost nonzero entry of a row has zeros in all its other positions;
    \item all zero rows, if any, occur below the nonzero rows.
\end{enumerate}
The leftmost nonzero entry of a nonzero row is called a \emph{pivot}, and its column is called a \emph{pivot column}. Columns that are not pivot columns are called \emph{non-pivot columns}. For a matrix $A$, we denote by $RREF(A)$ the unique reduced row echelon form row-equivalent to $A$.
\end{definition}

\begin{proposition}[Uniqueness of the RREF with prescribed pivots]
Let $A\in \mathbb{F}_q^{m\times n}$, and let $p_1<\cdots<p_r$ be column indices. There is at most one matrix $B$ such that:
\begin{enumerate}
    \item $B$ is row-equivalent to $A$;
    \item $B$ is in reduced row echelon form;
    \item the pivot columns of $B$ are exactly $p_1,\dots,p_r$.
\end{enumerate}
In particular, $RREF(A)$ is unique.
\end{proposition}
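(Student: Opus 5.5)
The statement to prove is the uniqueness of a reduced row echelon form with prescribed pivot columns. The argument rests on one invariant of row equivalence. If $B$ is row-equivalent to $A$, then $B = SA$ for some $S\in\operatorname{GL}_m(\F_q)$. Hence $B$ and $A$ have the same row space, and the same linear dependencies among their columns: for every $x\in\F_q^n$, $Ax=0$ if and only if $Bx=0$.

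I will take two matrices $B,B'$ satisfying conditions 1--3 and show that $B=B'$ column by column. Since they are row-equivalent to each other, their null spaces coincide.

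\textbf{Step 1 (row count and pivot columns).} Both matrices have pivots exactly at $p_1<\cdots<p_r$, so both have exactly $r$ nonzero rows. By conditions 1--4 of the RREF definition, the pivot column $p_i$ of $B$ equals the standard basis vector $e_i\in\F_q^m$. The same holds for $B'$. So $B_{p_i}=B'_{p_i}=e_i$ for all $i$.

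\textbf{Step 2 (non-pivot columns).} Fix a non-pivot column $j$. Rows $r+1,\dots,m$ of $B$ are zero, so $B_j=\sum_{i=1}^r b_{ij}e_i=\sum_{i=1}^r b_{ij}B_{p_i}$. This gives a vector $x$ with $x_j=1$, $x_{p_i}=-b_{ij}$ and all other entries $0$, satisfying $Bx=0$.

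By the shared null space, $B'x=0$ as well. Therefore $B'_j=\sum_i b_{ij}B'_{p_i}=\sum_i b_{ij}e_i=B_j$. Hence $B=B'$.

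Note that Step 2 does not need the echelon condition that entries to the left of a pivot vanish; it only uses the pivot columns being $e_i$ and the zero rows below. That condition is automatic anyway.

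\textbf{Final claim.} The pivot set of any RREF of $A$ is determined by $A$ itself. Column $j$ is a pivot column if and only if $A_j\notin\operatorname{span}(A_1,\dots,A_{j-1})$. This holds because:
\begin{itemize}
\item linear dependencies among columns are preserved by row equivalence;
\item in an RREF, a column is a pivot exactly when it is not in the span of the earlier columns, since earlier columns are supported on the earlier pivot rows only.
\end{itemize}
So any two RREFs of $A$ have the same pivots, and the first part gives that $RREF(A)$ is unique.

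No step is a real obstacle. The only point needing care is justifying the characterization of pivot columns via spans in the final claim.
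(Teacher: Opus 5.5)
Your proof is correct and complete. Both $B$ and $B'$ are forced to have pivot columns $e_1,\dots,e_r$. Each non-pivot column is then determined by a column dependency $Bx=0$ that every matrix row-equivalent to $A$ shares. Finally, the characterization $A_j\notin\operatorname{span}(A_1,\dots,A_{j-1})$ shows the pivot set is intrinsic to $A$, which gives the ``in particular'' clause.

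The paper gives no proof of this proposition; it records it in the background section as a standard fact. So there is no argument to compare against, and your null-space argument is a standard way to establish it.

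One point is worth stating explicitly. The direction ``a non-pivot column lies in the span of earlier columns'' uses that row $i$ vanishes in every column $j<p_i$. This is what makes $B_j$ a combination of the $e_i=B_{p_i}$ with $p_i<j$. Your remark calls this condition ``automatic,'' which is true, but it is genuinely used there even though Step 2 does not need it.
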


\begin{proposition}[Polynomial-time computation of the RREF]
Let $A\in \mathbb{F}_q^{m\times n}$, and suppose that the pivot columns of $RREF(A)$ are known. Then $RREF(A)$ can be computed in polynomial time by Gaussian elimination. More generally, given a candidate set of pivot columns, one can decide in polynomial time whether it is the pivot set of a reduced row echelon form row-equivalent to $A$, and if so compute that form.
\end{proposition}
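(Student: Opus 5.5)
We prove the statement by treating the pivot set as fixed data and running Gauss--Jordan elimination against it. Let $p_1<\cdots<p_r$ be the candidate pivot set, and let $\rho=\operatorname{rank}(A)$, computed in polynomial time by ordinary Gaussian elimination. If $r\neq\rho$, no row-equivalent RREF can have these pivots, since the number of pivots of any RREF row-equivalent to $A$ equals the rank, so we reject. Otherwise, consider the submatrix $A_P$ formed by the columns $p_1,\dots,p_r$. If some $B=SA$ with $S$ invertible is in RREF with pivots exactly $p_1,\dots,p_r$, then $BA_P$-columns form the first $r$ standard basis vectors $e_1,\dots,e_r$. Hence $A_P$ has rank $r$. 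We therefore check in polynomial time that $A_P$ has full column rank $r$; if not, we reject.

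Assuming the check passes, we construct the candidate. Row reduce $A$ while choosing pivots only in the columns $p_1,\dots,p_r$: process column $p_j$ in turn, find a row among those not yet used with a nonzero entry there, swap it to position $j$, scale it so the entry becomes $1$, and clear the rest of column $p_j$. Full column rank of $A_P$ guarantees that a nonzero entry is always available. The result is a matrix $B$, row-equivalent to $A$, whose columns $p_j$ equal $e_j$ and whose rows $r+1,\dots,m$ vanish, because the rank is $r$.

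Finally, we verify whether $B$ is in RREF with pivots $p_1,\dots,p_r$. This holds exactly when, for every $j$, the entries $B_{j,c}$ are zero for all $c<p_j$, that is, the leading entry of row $j$ sits at $p_j$. This is a direct $O(mn)$ check. If it fails, no valid form exists, because the uniqueness proposition says any RREF with these pivots is determined. Indeed, the row space of such an RREF, together with the prescribed identity columns, forces its rows to be the unique vectors of the row space whose restriction to the $P$-coordinates is $e_j$. Our $B$ has precisely these rows, since $B_P=[e_1\cdots e_r]$ and the row space of $B$ equals that of $A$. So the output is correct.

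Applying this construction with the pivot set produced by ordinary Gaussian elimination gives $RREF(A)$, which proves the first claim. The total cost is $O(mn\min(m,n))$ field operations.

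The only subtle step is the uniqueness argument showing that a failed verification really means no RREF with these pivots exists. It comes down to the following fact. Because $A_P$ has full column rank $r$ and the row space has dimension $r$, restriction to the coordinates in $P$ is an isomorphism from the row space onto $\F_q^r$. Consequently the rows of the form are uniquely determined.
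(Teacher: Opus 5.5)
Your argument is correct. The paper gives no proof of this proposition and records it as a standard background fact. The evident argument behind it is short:
\begin{itemize}
\item Gauss--Jordan elimination computes $RREF(A)$ in $O(mn\min(m,n))$ field operations.
\item By the preceding uniqueness proposition, a candidate set $p_1<\cdots<p_r$ is the pivot set of an RREF row-equivalent to $A$ if and only if it coincides with the pivot set of $RREF(A)$.
\end{itemize}
So the ``more general'' claim costs one elimination and a comparison of index sets.

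Your route is genuinely different. You force the pivots into the prescribed columns, which you correctly show is possible exactly when $A_P$ has full column rank $r=\operatorname{rank}(A)$. This produces the unique matrix $B$ row-equivalent to $A$ with $B_P=[e_1\cdots e_r]$. You then test a posteriori whether $B$ is in echelon form. The needed uniqueness is reproved via the isomorphism between the row space and $\F_q^r$ given by restriction to the coordinates in $P$.

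This is more work than the statement requires. What it buys is that it does not lean on the uniqueness proposition. As a by-product it also computes the systematic form of $A$ with respect to an arbitrary information set $P$, even when $P$ is not the RREF pivot set.

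Two small remarks. First, your closing sentence about applying the construction ``with the pivot set produced by ordinary Gaussian elimination'' is circular, since ordinary Gauss--Jordan elimination already outputs $RREF(A)$. For the first claim, simply feed in the known pivot set. Your checks then pass, and $B=RREF(A)$ by your uniqueness argument. Second, the initial rank check is what justifies your verification step inspecting only rows $1,\dots,r$. It guarantees rows $r+1,\dots,m$ of $B$ vanish, so it should stay.
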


%% file: PCE.tex
\section{Search to decision reduction for PCE}\label{sec:PCE-red}

In this section, we present our efficient reduction from search-PCE to PCE. In a nutshell, it consists in iteratively searching for $P(t)$ 
for $t=1,2,\ldots,n$ where $P$ is a permutation matrix such that 
$SGP=F$ for the generator matrices $G,F$ spanning the two input codes. The main idea is to test whether $P(t) = i_t$ by extending the generator matrix $G$ with repetitions of its $t$-th column, and by extending the generator matrix $F$ with repetitions of its $i_t$-th column. The PCE oracle returns a positive answer if $P(t) = i_t$, thus allowing us to compute the image of $P$ at $t$. Further extensions of $G$ and calls to the PCE oracle allow us to find all images of the secret permutation. Note that we cannot simply compute the image of $P$ at all $t$ with independent calls to the oracle because there might exist multiple solutions to search-PCE that do not agree with each other at all indices. 

% Nethertheless, a direct application of these ideas does not necessarily yield the intended result. For example, if the idea above is applied independently on different indices, it is possible that there exists a permutation $P$ with $SGP=F$ for some $S$ with $P(t)=i_t$, and another permutation 
%$P'$ with $S'GP'=F$ for some $S'$ and $P'(s)=i_s$ such that $P(s)\neq i_s$. 

We start with a lemma that allows us to assume a certain choice for the images of a solution $P$ at indices where $G$ has repeated columns. The idea is the following: if the columns $G_t,G_{t_1},\ldots,G_{t_k}$ 
are all identical, so will be the columns $F_{P(t)},F_{P(t_1)},\ldots,
F_{P(t_k)}$. Hence, a permutation $P'$ equal to $P$ on inputs other than $t,t_1,\ldots,t_k$ will also satisfy $SGP' = F$. 
\begin{lemma}\label{lemma:aux}
Let $G$ and $F$ be $k\times n $ matrices such that there exists an invertible $k\times k$ matrix and a permutation matrix $Q$ such that $SGQ=F$. Let $\mathcal{L}=\{j,j^{(1)},\dots, j^{(t)}\}\subseteq \{1,\dots,n\}$ be a set of indexes corresponding to columns of $F$ equal to the  $j$-th column of $F$. Let $\mathcal{T}=Q^{-1}(\mathcal{L})$. Then, for every bijection $\sigma$ between $\mathcal{T}$ and $\mathcal{L}$, there exists a permutation matrix $Q'$ such that the permutation corresponding to $Q'$ restricted to $\mathcal{T}$ is equal to $\sigma$, $Q'$ restricted to $\{1,\ldots,n\}\setminus\mathcal{T}$ is equal to $Q$, and $SGQ'=F$.
\end{lemma}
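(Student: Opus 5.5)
The plan is to construct $Q'$ explicitly and then check the identity $SGQ'=F$ column by column.

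\emph{Construction.} We use the convention from the definition of permutation matrices: $(AQ)_{Q(i)} = A_i$ for every $i$. Define the permutation $Q'$ on $\{1,\ldots,n\}$ by $Q'(i)=\sigma(i)$ for $i\in\mathcal{T}$ and $Q'(i)=Q(i)$ for $i\notin\mathcal{T}$. This is a bijection. Indeed, $Q$ maps $\{1,\ldots,n\}\setminus\mathcal{T}$ bijectively onto $\{1,\ldots,n\}\setminus\mathcal{L}$, because $\mathcal{T}=Q^{-1}(\mathcal{L})$. Meanwhile $\sigma$ maps $\mathcal{T}$ bijectively onto $\mathcal{L}$. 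The two images are disjoint and together cover $\{1,\ldots,n\}$. So $Q'$ is a genuine permutation matrix with the two prescribed restrictions.

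\emph{Verification.} Since $SGQ' = (SG)Q'$, column $Q'(i)$ of $SGQ'$ equals $(SG)_i = SG_i$. Similarly, column $Q(i)$ of $SGQ=F$ gives $SG_i = F_{Q(i)}$. We check each index $i$.

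\begin{itemize}
\item For $i\notin\mathcal{T}$, we have $Q'(i)=Q(i)$. So column $Q'(i)$ of $SGQ'$ is $SG_i=F_{Q(i)}=F_{Q'(i)}$.
\item For $i\in\mathcal{T}$, column $Q'(i)=\sigma(i)$ of $SGQ'$ is $SG_i=F_{Q(i)}$. Here $Q(i)\in\mathcal{L}$, so $F_{Q(i)}=F_j$ by hypothesis. Also $\sigma(i)\in\mathcal{L}$, so $F_{\sigma(i)}=F_j$ as well. Hence column $\sigma(i)$ of $SGQ'$ equals $F_{\sigma(i)}$.
\end{itemize}

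Since every column index of $F$ has the form $Q'(i)$ for a unique $i$, all columns agree and $SGQ'=F$.

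There is no real obstacle in this argument. The only care needed is to keep the convention $(AQ)_{Q(i)}=A_i$ straight, so that images and preimages under $Q$ are not confused. Note that the matrix $S$ is unchanged in passing from $Q$ to $Q'$.
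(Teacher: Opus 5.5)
Your proof is correct and constructs exactly the same permutation as the paper, which writes it as $Q'=PQ$ with $P$ supported on $\mathcal{T}$ and satisfying $GP=G$ (the columns $G_i$, $i\in\mathcal{T}$, all equal $S^{-1}F_j$), and then concludes $SGQ'=SGPQ=F$. Your column-by-column check on the $F$ side is an equivalent and slightly more self-contained verification: it uses the hypothesis on $\mathcal{L}$ directly and checks bijectivity explicitly, rather than passing through an automorphism of $G$ and composition of permutation matrices.
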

\begin{proof}
Simply observe that any permutation $P$ of columns of $G$ indexed by $\mathcal{T}$ leads to an automorphism of $G$ because $GP=G$. 
Now choose a permutation matrix $P$ such that 
for all $i\in\mathcal{T}$, $P(i)= Q^{-1}( \sigma(i))$, 
and for $i\notin \mathcal{T}$, $P(i)=i$.  
%Now choose $P$ to be a permutation matrix such that $PQ=\sigma$ when restricted to $L$, and set $Q_1=PQ$. 
Such a $P$ has support $\mathcal{T}$. Choose $Q'$ as the composition of $Q$ and $P$. This means that $Q'(i) = Q(P(i))=\sigma(i)$ for $i\in\mathcal{T}$ and in terms of matrices, $Q' = PQ$. It follows directly that we have $SGQ'=SGPQ=F$.
\end{proof}

Thanks to Lemma~\ref{lemma:aux}, if we know that $SGP=F$, and 
if the columns $G_t,G_{t_1},\ldots,G_{t_\ell}$ 
are all identical, then there are solutions $P'$ 
satisfying $P'(t) = P(t_i)$ for some $i\leq \ell$ and $P'(j)=P(j)$ for 
$j\neq t,t_1,\ldots,t_\ell$.

Next, we show how to compute the image of one element of $\{1,\ldots,n\}$ via $P$ by using the oracle. Without loss of generality, we assume that we need to compute the image of $1$. This means that, given $i_1\in\{1,\ldots,n\}$, we want to decide whether there is a permutation matrix $P$ such that 
$SGP=F$ for some $S\in\operatorname{GL}_k(\F_q)$ and $P(1) = i_1$. Let $m(A)$ be the maximum number of times a column in a matrix $A$ is repeated. Clearly, $m(G)=m(F)=m$ since $SGP=F$. Let $G_1,\dots G_n$ be the columns of $G$ and $F_1,\dots,F_n$ be the columns of $F$.
Let us append to $G$ the $m$ columns equal to $G_1$, thus constructing the matrix 
\[G^{(1)}:=(G\mid\underbrace{G_1,\dots, G_1}_m)\] 
and let us append to $F$ the $m$ columns equal to $F_{i_1}$, thus constructing the matrix 
\[F^{(1)}:=(F\mid\underbrace{F_{i_1},\dots, F_{i_1}}_m).\]
Let us ask the oracle if the code generated by $G^{(1)}$ is permutationally equivalent to the code generated by $F^{(1)}$. 
\begin{proposition}
The answer from the PCE oracle on input $\{G^{(1)},F^{(1)}\}$ determines if there is a $P$ such that $P(1) = i_1$: 
\begin{enumerate}
 \item[(i)] If negative, then $P(1)\neq i_1$ for any permutation matrix $P$ realizing the equivalence between $G$ and $F$.
 \item[(ii)] If positive, there are $S\in\operatorname{GL}_k(\F_q)$ and a permutation matrix $P$ with $SGP=F$ and $P(1)= i_1$. 
\end{enumerate}
\end{proposition}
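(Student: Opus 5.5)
We prove the two parts in contrapositive and direct form respectively. Part (i) is the contrapositive of the easy direction: if some $P$ with $SGP=F$ has $P(1)=i_1$, we build an equivalence between $G^{(1)}$ and $F^{(1)}$. Part (ii) is the substantive direction: from an equivalence of the extended matrices we extract an equivalence of $G$ and $F$ that sends $1$ to $i_1$. Throughout, $G$ and $F$ are assumed equivalent, since that is the setting of the reduction.

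For (i), suppose $SGP=F$ with $P(1)=i_1$. Let $\tilde P$ be the $(n+m)\times(n+m)$ block permutation matrix $\operatorname{diag}(P,I_m)$. Then $SG^{(1)}\tilde P=(SGP\mid SG_1,\dots,SG_1)$. Since $P(1)=i_1$, the definition of the permutation matrix gives $(SGP)_{i_1}=SG_1$, so $SG_1=F_{i_1}$. Hence $SG^{(1)}\tilde P=F^{(1)}$, and the oracle answers positively. This proves (i) by contraposition.

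For (ii), suppose $S'G^{(1)}P'=F^{(1)}$ for some invertible $S'$ and permutation $P'$ on $n+m$ indices. Let $\mathcal{T}$ be the set of indices of columns of $G^{(1)}$ equal to $G_1$. This set contains $1$ and the $m$ appended indices, so $|\mathcal{T}|\ge m+1$. Column equality is preserved under $S'$, so the indices $P'(\mathcal{T})$ carry columns of $F^{(1)}$ all equal to $S'G_1$. These are $\ge m+1$ equal columns. But every column of $F$ occurs at most $m$ times within $F$, so any class of equal columns of $F^{(1)}$ of size $>m$ must be the class of $F_{i_1}$, which contains the $m$ appended copies. Hence $S'G_1=F_{i_1}$, and the two classes of equal columns correspond bijectively under $P'$.

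Next apply Lemma~\ref{lemma:aux} to the equivalence $S'G^{(1)}P'=F^{(1)}$, taking $\mathcal{L}$ to be the indices of columns of $F^{(1)}$ equal to $F_{i_1}$. We choose a bijection $\sigma$ from $\mathcal{T}=P'^{-1}(\mathcal{L})$ to $\mathcal{L}$ with two properties: it maps the $m$ appended indices $n+1,\dots,n+m$ onto the $m$ appended indices of $F^{(1)}$, and it maps $\mathcal{T}\cap\{1,\dots,n\}$ onto $\mathcal{L}\cap\{1,\dots,n\}$ with $1\mapsto i_1$. This requires $|\mathcal{T}\cap\{1,\dots,n\}|=|\mathcal{L}\cap\{1,\dots,n\}|$, which holds because both sets have size $|\mathcal{T}|-m=|\mathcal{L}|-m$. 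The lemma yields $Q'$ agreeing with $P'$ outside $\mathcal{T}$, satisfying $S'G^{(1)}Q'=F^{(1)}$, and fixing the appended block.

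The remaining point is that $Q'$ maps $\{1,\dots,n\}$ to itself. On $\mathcal{T}$ this holds by the choice of $\sigma$. Outside $\mathcal{T}$, an index $j\le n$ cannot be sent to an appended index, because those lie in $\mathcal{L}=Q'(\mathcal{T})$. So $Q'$ restricts to a permutation $P$ of $\{1,\dots,n\}$. Reading off the first $n$ columns gives $S'GP=F$ with $P(1)=i_1$, as required.

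The main obstacle is the counting step in (ii), which forces $S'G_1=F_{i_1}$ and uses exactly the choice of $m$ as the maximal repetition count. The rest is bookkeeping with Lemma~\ref{lemma:aux}.
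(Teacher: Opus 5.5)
Your proposal is correct and follows the same route as the paper. For (i) you extend $P$ by the identity on the appended block; for (ii) you use that only the class of $G_1$ (resp.\ $F_{i_1}$) has more than $m$ equal columns to force $S'G_1=F_{i_1}$, then apply Lemma~\ref{lemma:aux} to arrange $1\mapsto i_1$ with the appended block fixed, and restrict to $\{1,\dots,n\}$. Your write-up is slightly more careful than the paper's: you derive, rather than assume, that the two classes have the same size, and you explicitly check that no index outside $\mathcal{T}$ is sent into the appended block.
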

\begin{proof}
Item~(i): If the answer is negative, then $P(1)\neq i_1$ for any permutation matrix $P$ realizing the equivalence between $G$ and $F$, as we now explain. Suppose the contrary, i.e. the answer is negative and there exists a $P$ realizing the equivalence between $G$ and $F$ such that $P(1)=i_1$. Then, set $P'$ to be the permutation of $\Sn_{n+m}$ that is equal to $P$ on $\{1,\dots,n\}$ and such that $P'(n+s)=n+s$ for all $s\in \{1,\dots, m\}$. Then we have 
\[SG^{(1)} P'=(SGP|SG_1,\dots, SG_1)=(F|F_{i_1},\dots, F_{i_1})=F^{(1)}.\]
So there is at least one permutation map realizing the equivalence between $G^{(1)}$ and $F^{(1)}$, and therefore the answer could not have been negative.

Item~(ii): Suppose now that the answer is positive, then we have to prove that $P$ can be chosen such that $P(1)= i_1$. Let $\ell$ be the number of times the column $G_1$ is repeated in $G$.
Let $P'$ be a permutation realizing the equivalence of $G^{(1)}$ and $F^{(1)}$ and $S'$ an invertible matrix such that $S'G^{(1)} P'=F^{(1)}$. Notice that there is only one column of $G^{(1)}$ repeated $m+\ell$ times, and only one column of $F^{(1)}$ repeated $m+\ell$ times. Hence, it must be that $P'(1)\in \{i_1,i_1^{(1)},\dots, i_1^{(\ell-1)},n+1,\dots, n+m\}$, where $\{i_1^{(1)},\dots, i_1^{(\ell-1)}\}$ is the set indexes of columns of $F$ equal to the $i_1$-th column. Using Lemma~\ref{lemma:aux}, this implies that $P'$ can be chosen such that $P'(1)=i_1$ and $P'(j)=j$ for any $j\in \{n+1,\dots, n+m\}$. Moreover, we also have that $P'(j)\in \{1,\dots ,n\}\setminus \{i_1\}$ for all $j\in\{2,\dots, n\}$. These arguments show that $P'$ can be constructed to 
ensure that its restriction $P$ to $\{1,\ldots,n\}$ belongs to $\mathcal{S}_n$ (i.e. $P(\{1,\ldots,n\})=\{1,\ldots,n\}$) and satisfies 
$P(1)=i_1$.
It follows immediately an equivalence between $G$ and $F$ given by $P$ since 
\[S'G^{(1)} P' = (SGP\mid SG_1,\dots, SG_1) =(F\mid F_{i_1},\dots, F_{i_1})=F_1\]
so in particular $SGP=F$.
\end{proof}

Let us now show by induction on the number of indices for which we know an image how to do the general step.
We assume that we found the set $\mathcal A_{T-1}$ of  $T-1$ 
indices whose images by a permutation $P^{(T-1)}\in \Sn_n$ realizing the equivalence between $G$ and $F$ are known. This means that we know that there exist $S^{(T-1)}\in\operatorname{GL_k}(\F_q)$ and a permutation matrix $P^{(T-1)}$ such that $S^{(T-1)}GP^{(T-1)}=F$ and $P^{(T-1)}(t)=i_t$ for all $t\in \mathcal A_{T-1}$ and $|\mathcal A_{T-1}|=T-1$. 
The reader should notice that it is important to carry over the information of the previous steps as we need to be able to extend the ``same'' permutation step by step (i.e., when we move to the next element in the domain, we need to avoid finding an image of another permutation -- for example in the case in which the automorphism group of the code  acts transitively, thus allowing any possible image for the index $i$).
In the rest of this section, we show that by using the oracle, we can decide whether there exists a $P^{(T)}\in\Sn_n$ with $S^{(T)}GP^{(T)}=F$ for some $S^{(T)}\in\operatorname{GL}_k(\F_q)$ such that $P^{(T)}(t)=i_t$ for some $t\not\in \mathcal A_{T-1}$ and $P^{(T)}(t)=i_t$ for all $t\in \mathcal A_{T-1}$. 
Let again $m(G)=m(F)=m$ be the maximum number of times a column of $G$ (or $F$) is repeated. 
Let us define the sets $\mathcal G_{\mathcal A_{T-1}}=\{G_t:\: t\in \mathcal A_{T-1}\}=\{G_{t_1},\dots G_{t_\ell}\}$ and $\mathcal F_{\mathcal A_{T-1}}=\{F_{P^{(T-1)}(t_1)},\dots F_{P^{(T-1)}(t_\ell)}\}$. 
Since we counted columns without repetitions, we must have $\ell\leq |\mathcal A_{T-1}|= T-1$.
Append to $G$: $m$ columns equal to $G_{t_1}$, $2m$ columns equal to $G_{t_2}$, and eventually $\ell m$ columns equal to $G_{t_\ell}$, and denote the resulting matrix by $G^{(T-1)}$. 
Analogously, append to $F$: $m$ columns equal to $F_{P(t_1)}$, $2m$ columns equal to $F_{P(t_2)}$, and eventually $\ell m$ columns equal to $F_{P(t_\ell)}$, thus obtaining $F^{(T-1)}$. Let $N_{T-1} := n+\sum_{h=1}^{\ell}hm$ be the number of columns of $G^{(T-1)}$ and $F^{(T-1)}$.
\begin{fact}
The matrices $G^{(T-1)}$ and $F^{(T-1)}$ span equivalent codes.
\end{fact}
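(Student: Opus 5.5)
The plan is to exhibit an explicit pair of an invertible matrix and a permutation matrix realizing the equivalence, built directly from the inductive hypothesis. By assumption there are $S^{(T-1)}\in\operatorname{GL}_k(\F_q)$ and $P^{(T-1)}\in\Sn_n$ with $S^{(T-1)}GP^{(T-1)}=F$ and $P^{(T-1)}(t)=i_t$ for $t\in\mathcal A_{T-1}$. I will use the same $S^{(T-1)}$ and extend $P^{(T-1)}$ to a permutation $P'\in\Sn_{N_{T-1}}$ that agrees with $P^{(T-1)}$ on $\{1,\dots,n\}$ and fixes every appended index $n+1,\dots,N_{T-1}$. This mirrors the argument in Item~(i) of the preceding proposition.

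The key check is blockwise. On the first $n$ columns we get $S^{(T-1)}GP^{(T-1)}=F$. The appended block of $G^{(T-1)}$ consists of $hm$ copies of $G_{t_h}$ for $h=1,\dots,\ell$, and the corresponding block of $F^{(T-1)}$ consists of $hm$ copies of $F_{P^{(T-1)}(t_h)}$, in the same positions. Since $P'$ fixes these positions, it suffices to verify that $S^{(T-1)}G_{t_h}=F_{P^{(T-1)}(t_h)}$ for each $h$. This follows from the column-permutation convention: if $B=AP$, then column $\pi(i)$ of $B$ is column $i$ of $A$. Hence column $P^{(T-1)}(t_h)$ of $S^{(T-1)}GP^{(T-1)}=F$ equals $S^{(T-1)}G_{t_h}$. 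Assembling the blocks gives $S^{(T-1)}G^{(T-1)}P'=F^{(T-1)}$, so the two codes are permutationally equivalent.

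The only delicate point is bookkeeping. The construction of $F^{(T-1)}$ must use the images $P^{(T-1)}(t_h)=i_{t_h}$, which are known, rather than an arbitrary $P$. The representatives $t_1,\dots,t_\ell$ of distinct columns must also be matched with the same multiplicities $hm$ on both sides. Once the appended blocks are ordered identically in $G^{(T-1)}$ and $F^{(T-1)}$, the identity extension works and no appeal to Lemma~\ref{lemma:aux} is needed.
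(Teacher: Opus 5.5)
Your proposal is correct and follows the paper's proof: keep the same invertible matrix, extend $P^{(T-1)}$ by the identity on the appended indices $n+1,\dots,N_{T-1}$, and use $S^{(T-1)}G_{t_h}=F_{P^{(T-1)}(t_h)}$ to match the appended blocks. The only difference is that you explicitly pair $S^{(T-1)}$ with $P^{(T-1)}$, whereas the paper writes an unspecified solution $S,P$. This makes your version a slightly more careful statement of the same argument.
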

\begin{proof}
Since there is $S$ such that $SGP=F$, we know that $G^{T-1}$ and $F^{T-1}$ 
span equivalent codes. Indeed, for such a $S$, we have $SG_{t_i}=F_{P(t_i)}$, so we can extend $P$ to a permutation $P'^{(T-1)}\in\Sn_{N_{T-1}}$ defined by  
$P'^{(T-1)}(i)=P(i)$ for $i\leq n$, and $P'^{(T-1)}(i)=i$ for $i>n$ to get $SG^{(T-1)}P'^{(T-1)}=F^{(T-1)}$.
\end{proof}

\begin{fact}\label{fact:Pprime}
There is a permutation matrix $P'^{(T-1)}\in\Sn_{N_{T-1}}$ and an invertible $k\times k$ matrix $S'^{(T-1)}$ such that 
$$S'^{(T-1)} G^{(T-1)} P'^{(T-1)} = F^{(T-1)}$$ 
satisfying $P'^{(T-1)}(t_i)=P^{(T-1)}(t_i)$ for all $t_i\in \mathcal A_{T-1}$.
\end{fact}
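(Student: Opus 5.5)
The witness can be taken directly from the extension of $P^{(T-1)}$ built in the proof of the preceding Fact. By the induction hypothesis, there exist $S^{(T-1)}\in\operatorname{GL}_k(\F_q)$ and a permutation matrix $P^{(T-1)}$ with $S^{(T-1)}GP^{(T-1)}=F$ and $P^{(T-1)}(t)=i_t$ for all $t\in\mathcal A_{T-1}$. We set $S'^{(T-1)}:=S^{(T-1)}$. We define $P'^{(T-1)}\in\Sn_{N_{T-1}}$ by $P'^{(T-1)}(i)=P^{(T-1)}(i)$ for $i\le n$ and $P'^{(T-1)}(i)=i$ for $n<i\le N_{T-1}$. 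This is a permutation of $\{1,\dots,N_{T-1}\}$, because $P^{(T-1)}$ permutes $\{1,\dots,n\}$ and the identity permutes the remaining indices. By construction it agrees with $P^{(T-1)}$ on $\mathcal A_{T-1}$, and in particular on each $t_i$.

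It then remains to verify $S^{(T-1)}G^{(T-1)}P'^{(T-1)}=F^{(T-1)}$, which we do column by column.

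\textbf{Columns $j\le n$.} Column $P^{(T-1)}(j)$ of the left side is $S^{(T-1)}G_j$. This equals $F_{P^{(T-1)}(j)}$ because $S^{(T-1)}GP^{(T-1)}=F$.

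\textbf{Columns $j>n$.} Such a column lies in the block of $hm$ copies of $G_{t_h}$. It stays in place and becomes $S^{(T-1)}G_{t_h}$. The identity $S^{(T-1)}GP^{(T-1)}=F$ applied to column $t_h$ gives $S^{(T-1)}G_{t_h}=F_{P^{(T-1)}(t_h)}=F_{i_{t_h}}$, which is exactly the column appended to $F$ in the same position.

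The only point needing care is a consistency issue. The columns appended to $F$ must be defined using the same permutation $P^{(T-1)}$, equivalently the known values $i_{t_h}$, whose compatibility with $S^{(T-1)}$ the induction hypothesis guarantees. The construction of $F^{(T-1)}$ uses $F_{P(t_h)}$, so we must read $P$ as $P^{(T-1)}$. This is legitimate because only the values on $\mathcal A_{T-1}$ enter, and those are the fixed $i_t$. A second subtlety is that $\mathcal G_{\mathcal A_{T-1}}$ lists distinct columns: if $G_{t}=G_{t'}$ for $t,t'\in\mathcal A_{T-1}$, then $F_{i_t}=S^{(T-1)}G_t=F_{i_{t'}}$. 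Hence the choice of representative $t_h$ does not affect $F^{(T-1)}$, and this removes any ambiguity.
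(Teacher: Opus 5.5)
Your proof is correct for the statement as written. Extending $P^{(T-1)}$ by the identity on the appended positions and keeping $S^{(T-1)}$ gives a valid witness. Your column-by-column check is sound, including the observation that $F_{i_t}=S^{(T-1)}G_t$ makes the choice of representatives $t_h$ irrelevant. This is exactly the construction in the proof of the preceding Fact, so the literal claim follows immediately from it.

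The paper argues differently. It starts from an \emph{arbitrary} pair $(S',P')$ with $S'G^{(T-1)}P'=F^{(T-1)}$. The class of $G_{t_h}$ in $G^{(T-1)}$ has a repetition count between $hm+1$ and $(h+1)m$, and no other class shares it. Hence $P'$ must send that class onto the class of $F_{P^{(T-1)}(t_h)}$ in $F^{(T-1)}$, so $S'G_{t_h}=F_{P^{(T-1)}(t_h)}$. Lemma~\ref{lemma:aux} is then applied inside each such class to reassign indices so that $t_i\mapsto P^{(T-1)}(t_i)$, with the same $S'$.

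What the paper's route buys is a normalization statement: \emph{every} witness of the extended equivalence can be adjusted to agree with $P^{(T-1)}$ on $\mathcal A_{T-1}$. That is the form actually invoked in Item~(ii) of the proposition that follows. There, $P'^{(T-1)}$ is the restriction of the oracle's unknown witness $P'^{(T)}$, which has already been normalized to send $t$ to $i_t$ and fix the last block. It comes with $S'^{(T)}$ and cannot be chosen freely. Your construction produces a fresh witness with $S^{(T-1)}$, unrelated to $S'^{(T)}$. It proves existence cleanly but would not support that downstream step. For that use you would need to add the multiplicity argument combined with Lemma~\ref{lemma:aux}.
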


\begin{proof}
Any permutation $P'\in \Sn_{N_{T-1}}$ such that there is $S'$ 
with $S'G^{(T-1)}P'=F^{(T-1)}$, has the property that the associated invertible matrix $S'$ satisfies $S'G_{t_i}=F_{P'(t_i)}$ for all $t_i\in \mathcal A_{T-1}$. This is because the number of times the column $G_{t_j}$  is repeated is unique in the matrix $G^{T-1}$ for all $j\in\{1,\dots, \ell\}$ for $\ell := |\mathcal G_{\mathcal A_{T-1}}|$ and the number of times the column $F_{P'(t_j)}$ is repeated is unique for the matrix $F^{(T-1)}$, and equal to the number of times $G_{t_j}$ is repeated in $G^{(T-1)}$. This means that $\forall i\leq \ell$, $t_i\in \left.P'\right. ^{-1}\left\{ \mathcal{S}^{F^{(T-1)}}_{F_{P'(t_i)}}\right\}$ where $\mathcal{S}^{F^{(T-1)}}_{F_{P'(t_i)}}$ is the set of indices 
$j$ such that the $j$-th column of $F^{(T-1)}$ is equal to $F_{P'(t_i)}$. 
Hence, by Lemma~\ref{lemma:aux} applied to $\mathcal{L}:=\mathcal{S}^{F^{(T-1)}}_{F_{P'(t_i)}}$, there is a permutation $P'^{(T-1)}$ satisfying  $P'^{(T-1)}(t_i)=P^{(T-1)}(t_i)$ for all $t_i\in \mathcal A_{T-1}$ and $S'^{(T-1)}G^{(T-1)}P'^{(T-1)}=F^{(T-1)}$ for an invertible matrix $S'^{(T-1)}$.
\end{proof}
%Now, the two newly created matrices $G'$ and $F'$ span codes that are equivalent and such that any $P'$ that realizes the equivalence can be chosen in such a way that $P'(t)=i_t$ for all $t\in A$ because of Lemma \ref{lemma:aux}.

Now, let us  take an element $t\not\in \mathcal A_{T-1}$. Consider the column $G_{t}$. If $G_t\in \mathcal G_{\mathcal A_{T-1}}$, then $G_t=G_{i}$ for some $i\in \mathcal A_{T-1}$. Next,  find a column of $F$ equal to $F_{P^{(P-1)}(t)}$ with an index $j\not\in P(\mathcal A_{T-1})$ (such $j$ must exist, or otherwise columns are repeated a different number of times, which is not possible for equivalent generator matrices). Using Lemma~\ref{lemma:aux}, 
there exists $P^{(T)}$ realizing the equivalence between $G$ and $F$ with $P^{(T)}(i) = P^{(T-1)}(i)$ for $i\in \mathcal A_{T-1}$ and 
 $P^{(T)}(t)=j$. Then, extend $\mathcal A_{T-1}$ as $\mathcal A_T\leftarrow \mathcal A_{T-1}\cup \{t\}$.

Let us now suppose that $G_t\not\in \mathcal G_{\mathcal A_{T-1}}$.
We append to $G^{(T-1)}$ a number $(\ell+1)m$ of columns equal to $G_{t}$, thus obtaining $G^{(T)}$ and  to $F^{(T-1)}$ 
let us append $(\ell+1)m$ columns equal to $F_{i_t}$, obtaining $F^{(T)}$. Now, we ask the oracle if the two codes generated by $G^{(T)}$ and $F^{(T)}$ are equivalent. 

\begin{proposition}
 The answer from the PCE oracle allows us to decide if there is a $P_T$ realizing the equivalence between $G$ and $F$ while agreeing with $P_{T-1}$ on $\mathcal A_{T-1}$ and such that $P_T(t) = i_t$: 
\begin{enumerate}
 \item[(i)] If the answer is negative, then $P^{(T)}(t)\neq i_t$ for any permutation matrix $P^{(T)}\in\Sn_n$ realizing the equivalence between $G$ and $F$ and agreeing with $P^{(T-1)}$ on all inputs from $\mathcal A_{T-1}$.
 \item[(ii)] If the answer is positive, there are $S^{(T)}\in\operatorname{GL}_k(\F_q)$ and a permutation matrix $P^{(T)}\in\Sn_n$ with $S^{(T)}GP^{(T)}=F$, $P^{(T)}(i)=P^{(T-1)}(i)$ for $i\in \mathcal A_{T-1}$ and $P^{(T)}(t)= i_t$. 
\end{enumerate}
\end{proposition}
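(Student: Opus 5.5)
The plan is to follow the proof of the base-case proposition, with the multiplicity tags $hm$ doing the work of pinning down the already-fixed indices. Recall that $G^{(T)}$ is $G$ with $hm$ extra copies of $G_{t_h}$ appended for $h=1,\dots,\ell$, and $(\ell+1)m$ extra copies of $G_t$. $F^{(T)}$ is built the same way from $F_{P^{(T-1)}(t_h)}$ and $F_{i_t}$. Since each column of $G$ occurs at most $m$ times, the total multiplicity of $G_{t_h}$ in $G^{(T)}$ lies in $[hm+1,(h+1)m]$. The total multiplicity of $G_t$ lies in $[(\ell+1)m+1,(\ell+2)m]$, and every other column has multiplicity at most $m$. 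These ranges are pairwise disjoint. The same holds in $F^{(T)}$.

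\textbf{Item (i).} I will prove the contrapositive. Suppose some $P^{(T)}$ realizing $SGP^{(T)}=F$ agrees with $P^{(T-1)}$ on $\mathcal A_{T-1}$ and sends $t$ to $i_t$. Then $SG_{t_h}=F_{P^{(T-1)}(t_h)}$ for every $h$, and $SG_t=F_{i_t}$. Extend $P^{(T)}$ by the identity on indices $n+1,\dots,N_T$, exactly as in the first Fact. This gives $SG^{(T)}P'=F^{(T)}$, so the oracle cannot answer negatively.

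\textbf{Item (ii).} Suppose the oracle answers positively with some $S',P'$ satisfying $S'G^{(T)}P'=F^{(T)}$. Because column multiplicities are preserved and the ranges above are disjoint, $S'$ must send the class of $G_{t_h}$ onto the class of $F_{P^{(T-1)}(t_h)}$, and the class of $G_t$ onto the class of $F_{i_t}$. In particular $S'G_{t_h}=F_{P^{(T-1)}(t_h)}$ and $S'G_t=F_{i_t}$. Also, $S'$ maps every other column of $G$ to a column of $F$ of the same multiplicity.

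Next I apply Lemma~\ref{lemma:aux} separately to each repeated-column class $\mathcal L$ of $F^{(T)}$. This lets me rearrange $P'$ inside each class. I will choose the rearrangement so that:
\begin{itemize}
\item appended indices map to appended indices by the identity;
\item each original index of $G$ in the class of $G_{t_h}$ maps to an original index of $F$, with every $s\in\mathcal A_{T-1}$ in that class sent to $P^{(T-1)}(s)$;
\item $t$ is sent to $i_t$.
\end{itemize}
This is possible because each class of $G^{(T)}$ has the same number of original and of appended indices as its image class in $F^{(T)}$. The original counts match because $S'$ restricted to $G$ must produce, for each class, as many original columns as $F$ has. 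Restricting the adjusted $P'$ to $\{1,\dots,n\}$ then gives $P^{(T)}\in\Sn_n$ with $S'GP^{(T)}=F$ and the required values.

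\textbf{Main obstacle.} The delicate point is the counting argument in item (ii): showing that within each class the numbers of original and appended positions agree on both sides, so that the block-wise bijection respecting $\mathcal A_{T-1}$ exists. For the appended positions this is immediate, since they were appended in equal numbers by construction. For the original positions, the total class size is preserved by $S'$ and the appended counts agree, so the original counts must agree as well.

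One further check is needed: $i_t$ must not already be some $P^{(T-1)}(s)$ with $s\in\mathcal A_{T-1}$. The class of $F_{i_t}$ has the wrong multiplicity to be one of the $F_{P^{(T-1)}(t_h)}$ classes, because $G_t\notin\mathcal G_{\mathcal A_{T-1}}$. So after the reassignment there is no conflict between the value at $t$ and the values on $\mathcal A_{T-1}$.
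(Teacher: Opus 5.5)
Your proof is correct and is essentially the paper's argument: for (i) you extend the equivalence by the identity on the appended indices, and for (ii) you use the disjoint multiplicity ranges to force $S'G_{t_h}=F_{P^{(T-1)}(t_h)}$ and $S'G_t=F_{i_t}$, then rearrange within each class via Lemma~\ref{lemma:aux} and restrict to $\{1,\dots,n\}$. The only difference is organizational: you fix all classes at once, counting original versus appended positions, which makes the original-to-original property explicit, whereas the paper fixes the new block first and then invokes Fact~\ref{fact:Pprime} for the old blocks; note also that your claim ``the same holds in $F^{(T)}$'' needs the positive answer, since a collision $F_{i_t}=F_{P^{(T-1)}(t_h)}$ would produce a column of multiplicity greater than $(\ell+2)m$ in $F^{(T)}$.
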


\begin{proof}
Item~(i): If the answer is negative, then there cannot be a permutation matrix $P^{(T)}$ agreeing with $P^{(T-1)}$ on $\mathcal A_{T-1}$ with $P^{(T)}(t)=i_t$ as we now explain. Let $N_T:= n + \sum_{h=1}^{\ell + 1}hm$ be the total number of columns of $G^{(T)}$. If $P^{(T-1)}$ could be extended to some $P^{(T)}\in\Sn_n$ with $P^{(T)}(t)=i_t$ then we would have that
\[S^{(T)}G^{(T)}P'^{(T)}=F^{(T)}\]
for $S^{(T)}$ such that $S^{(T)}GP^{(T)}=F$ by constructing $P'^{(T)}$ as $P'^{(T)}(j):=P^{(T)}(j)$ for all $j\leq n$ and $P'^{(T)}(j):=j$ for all $j>n$.
%
%\[S''_TG''_TP''_T=F''_T\]
%for some invertible $S''_T$ by constructing $P''_T$ as $P''_T(j)=i_j$ for all $j\in A_{T-1}\cup \{t\}$ (these images were already fixed since the permutation $P_{T-1}$ has already been defined on $A_{T-1}$ by the inductive step), and as $P(j)=j$ for all $j>N-(\ell+1)m$. 
So the answer of the oracle could not have been negative.

Item~(ii): If the answer is positive, then we have to prove that the partially identified permutation $P^{(T-1)}$ we have previously generated can be extended by setting $P^{(T)}(t)=i_t$. Since the answer is positive, the codes generated by $G^{(T)}$ and $F^{(T)}$ are equivalent. 
Let $P'^{(T)}$ be a permutation that realizes the equivalence, i.e. 
$S'^{(T)}G^{(T)}P'^{(T)}=F^{(T)}$ for some $S'^{(T)}\in \operatorname{GL}_k(\F_q)$. We are now going to derive from the permutation $P'^{(T)}\in\Sn_{N_T}$ realizing the equivalence between $G^{(T)}$ and $F^{(T)}$ a permutation $P^{(T)}\in\Sn_n$ that realizes the equivalence between $G$ and $F$ (i.e. such that there exists $S^{(T)}\in \operatorname{GL}_k(\F_q)$ with $S^{(T)}GP^{(T)}=F$) and such that $P^{(T)}(t)=i_t$ (so that $P^{(T-1)}$ can be extended to $i_t$ on $t\not \in \mathcal A_{T-1}$). In other words, we are going to show that if one has constructed a set $\mathcal A_{T-1}\subseteq \{1,\dots, n\}$ of correct images for a permutation $P^{(T-1)}$ realizing the equivalence, then there is a permutation realizing the equivalence between $G^{(T)}$ and $F^{(T)}$ that satisfies the following properties: 
\begin{itemize}
\item it agrees with $P^{(T-1)}$ on $\mathcal A_{T-1}$,
\item it maps $t$ to $i_t$,
\item it maps $\{1,\dots, n\}$ to $\{1,\dots, n\}$.
\end{itemize}
The restriction of such permutation to $\{1,\ldots,n\}$ will provide an equivalence $P^{(T)}$ between $G$ and $F$ satisfying $P^{(T)}(j)=i_j$ for all $j\in \mathcal A_{T-1}\cup \{t\}$ and therefore further extending $P^{(T-1)}$ on the larger set $\mathcal A_T=\mathcal A_{T-1}\cup\{t\}$.

Since the column vectors $G_t$ and $F_{i_t}$ are repeated the same unique number of times in $G^{(T)}$ (resp. in $F^{(T)}$), we necessarily have $S'^{(T)}G_t=F_{i_t}$. We know that $P'^{(T)}\in\Sn_{N_T}$ maps indices of columns of $G^{(T)}$ equal to $G_t$ to indices of columns of $F^{(T)}$ equal to $F_{i_t}$. In addition, using Lemma~\ref{lemma:aux}, we know that there are permutations that map these indices in any arbitrary order. In particular, we can assume that 
\begin{itemize}
    \item $P'^{(T)}(t)=i_t$
    \item $P'^{(T)}(h) = h$ for $h\in\{N_T-(\ell+1)m+1,\ldots,N_T\}$ (i.e. $h$ is in the block of the last $(\ell+1)m$ indices).
    \item $P'^{(T)}(h) = \sigma(h)$ for any bijection 
    $$\sigma:\{\text{indices of columns of $G$ equal to $G_t$}\}\setminus \{t\}\rightarrow\{\text{indices of columns of $F$ equal to $F_{i_t}$}\}\setminus\{i_t\}.$$
\end{itemize}
Moreover, such permutation $P'^{(T)}$ maps $\{1,\ldots,N_T-(\ell+1)m\}$ to $\{1,\ldots,N_T-(\ell+1)m\}$. Hence, its restriction to $\{1,\ldots,N-(\ell+1)m\} = \{1,\ldots,N_{T-1}\}$ is a permutation 
$P'^{(T-1)}$ realizing an equivalence between $G^{(T-1)}$ and $F^{(T-1)}$. But we already observed in Fact~\ref{fact:Pprime} that such a $P'^{(T-1)}$ 
could be assumed to agree on $\mathcal A_{T-1}$ with $P^{(T-1)}$ realizing the equivalence between $G$ and $F$, and satisfying $P^{(T-1)}(j)=i_j$ for all $j\in A_{T-1}$. Therefore, by defining $P^{(T)}$ as the restriction of $P'^{(T)}$ to $\{1,\ldots,n\}$, we proved that if the answer of the oracle is positive, there exists a permutation $P^{(T)}$ of the columns of $G$ that realizes the equivalence with $F$ and that can be chosen with the property $P^{(T)}(j)=i_j$ for all $j\in \mathcal A_T$.
\end{proof}

To conclude the reduction from search-PCE to PCE, we simply iterate these steps until $\mathcal A_T=\{1,\dots, n\}$. This procedure is outlined in Algorithm~\ref{alg:search-to-decision-PCE}.  At the end, we have identified one of the potential permutation matrices $P$ realizing the equivalence between $G$ and $F$. The corollary below gives the specific count of oracle calls performed during the course of this reduction, together with an upper bound on the input size. 

\begin{algorithm}[ht]
\caption{Search to decision reduction for PCE.}
\begin{algorithmic}[1]\label{alg:search-to-decision-PCE}
  \REQUIRE Generator matrices $F,G$ of two $[n,k,d]_q$ codes such that there are $S\in GL_k(\F_q)$ and a permutation\\ matrix $P$ with $F = SGP$.
  \ENSURE A permutation matrix $P$ such that there is $S\in GL_k(\F_q)$ with $F = SGP$.
\STATE $m\leftarrow$ maximum number of columns in $G$ that are equal to each other. 
\STATE $P'\leftarrow 0^{n\times n}$, $\mathcal A\leftarrow \{\}$, $\mathcal B\leftarrow\{\}$, $\mathcal{F}\leftarrow\{\}$, $\mathcal{G}\leftarrow \{\}$.
\FOR{$i=1,\ldots,n$}
\IF{$G_i$ is not equal to an element of $\mathcal{G}$}
\STATE Let $\{G_{t_1},\ldots,G_{t_\ell}\}=\mathcal{G}$, and $\{F_{P(t_1)},\ldots,F_{P(t_\ell)}\}=\mathcal{F}$.
\STATE $G^{(i-1)}\leftarrow G$, $F^{(i-1)}\leftarrow F$.
\FOR{$h\leq \ell$}
\STATE Append $hm$ copies of $G_{t_h}$ at the end of $G^{(i-1)}$.
\STATE Append $h m$ copies of $F_{P(t_h)}$ at the end of $F^{(i-1)}$.
\ENDFOR
\FOR{$j\notin B$}
\STATE $G^{(i)}\leftarrow G^{(i-1)}$, $F^{(i)}\leftarrow F^{(i-1)}$.
\STATE Append $(\ell+1) m$ copies of $G_{i}$ at the end of $G^{(i)}$.
\STATE Append $(\ell+1) m$ copies of $F_{j}$ at the end of $F^{(i)}$.
\IF{$G^{(i)}\permequiv F^{(i)}$}
\STATE $\mathcal A\leftarrow \mathcal A\cup \{i\}$, $\mathcal B\leftarrow \mathcal B\cup \{j\}$, $\mathcal{G}\leftarrow \mathcal{G}\cup \{G_i\}$, $\mathcal{F}\leftarrow \mathcal{F}\cup \{F_j\}$, $P_{j,i}\leftarrow 1$.
\STATE \textbf{break}
\ELSE
\STATE Let $t_h$ such that $G_i$ is equal to $G_{t_h}\in\mathcal{G}$. 
\STATE Find $j\notin B$ such that $F_j$ is equal to $F_{P(t_h)}$.
\STATE  $\mathcal A\leftarrow \mathcal A\cup \{i\}$, $\mathcal B\leftarrow \mathcal B\cup \{j\}$, $P_{j,i}\leftarrow 1$.
\ENDIF
\ENDFOR
\ENDIF
\ENDFOR
\RETURN $P$
\end{algorithmic}
\end{algorithm}

\begin{corollary}
Algorithm~\ref{alg:search-to-decision-PCE} performs at most $n^2$ calls to an oracle to PCE with input codes of dimension $k$ and length at most $n^2(n+1)/2$ over $\F_q$ and eventually returns a permutation matrix $P$ such that there exists an invertible matrix $S$ with $SGP=F$. 
\end{corollary}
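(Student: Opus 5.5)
The corollary has three parts, and I will prove them in this order: correctness of the returned $P$, the bound on the number of oracle calls, and the bound on the length of the queried codes.

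\textbf{Correctness.} I will prove by induction on the outer loop index $i$ the following invariant. After processing index $i$, the set $\mathcal A=\{1,\dots,i\}$ and the partial assignment $P_{j,i}=1$ agree with some permutation $P^{(i)}$ realizing the equivalence between $G$ and $F$. Moreover, $\mathcal G$ and $\mathcal F$ list distinct column values together with their assigned images.

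There are two cases for the inductive step.
\begin{itemize}
\item If $G_i$ already equals a column in $\mathcal G$, the invariant is preserved by the Lemma~\ref{lemma:aux} argument from the text. A free index $j\notin\mathcal B$ with $F_j=F_{P(t_h)}$ exists, because equal columns occur equally often in $G$ and $F$.
\item If $G_i$ is new, the inner loop queries the oracle for each candidate $j\notin\mathcal B$. By item (i) of the general-step proposition, the true image $P^{(i-1)}(i)$ can never be rejected, and it lies outside $\mathcal B$ because $P^{(i-1)}$ is a bijection. So some query returns positive. By item (ii), the first positive $j$ gives a permutation $P^{(i)}$ extending $P^{(i-1)}$ with $P^{(i)}(i)=j$.
\end{itemize}
At $i=n$ the assignment is a full permutation matrix $P=P^{(n)}$, so $SGP=F$ for some $S$. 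When reading the pseudocode, I will interpret the ``else'' branch as the branch taken when $G_i\in\mathcal G$, which is what the text describes.

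\textbf{Oracle calls.} For each $i$ the inner loop runs over $j\notin\mathcal B$, which is at most $n-|\mathcal B|\le n$ values. It makes one call per value and stops at the first positive answer. Over $n$ outer iterations this gives at most $n\cdot n=n^2$ calls. A sharper count of $\sum(n-i+1)=n(n+1)/2$ is available but not needed.

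\textbf{Length bound.} At the query for index $i$, let $\ell\le i-1\le n-1$ be the number of distinct column values recorded so far. The queried matrix has
\[
N = n + \sum_{h=1}^{\ell+1} h m = n + m\frac{(\ell+1)(\ell+2)}{2}
\]
columns. Using $\ell+1\le n$ gives $N\le n+m\,n(n+1)/2$. This bound involves $m$, which can be as large as $n$, so it does not directly yield $n^2(n+1)/2$. This reconciliation is the step I expect to be the main obstacle.

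To close the gap, I will use that the number of distinct column values is at most $n/m$, since one value alone already occurs $m$ times. This gives $\ell+1\le n-m+1$. Then
\[
n+m\,\frac{(n-m+1)(n-m+2)}{2}\le n+\frac{n\cdot n(n+1)}{2}\cdot\frac{1}{n}\cdot n ,
\]
and a crude estimate of this kind gives $N\le n^2(n+1)/2$ for $n\ge 2$. If that inequality turns out to be tight or fails in a corner case, I will fall back on the cruder estimate $N\le n+n\cdot n(n+1)/2$, which already shows the length is $O(n^3)$. I will note that the stated constant may require reading ``at most $n^2(n+1)/2$'' up to the additive $n$.

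\textbf{Running time.} Every other step of the algorithm is polynomial: counting repeated columns, and building the appended matrices of polynomial size.
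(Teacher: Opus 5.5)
Your correctness argument and your count of at most $n$ queries per new column value match the reasoning the paper relies on. The paper states the corollary right after the two propositions without a separate proof. You also usefully make explicit that the true image $P^{(i-1)}(i)\notin\mathcal B$ forces some query in the inner loop to succeed, and you read the misplaced else branch of the pseudocode correctly. The gap is the length bound. It is part of the statement, your proposal does not establish it, and you end by suggesting the statement be weakened by an additive $n$.

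Concretely, the claim that $G$ has at most $n/m$ distinct column values is false. One column repeated $m$ times plus $n-m$ pairwise distinct other columns gives $n-m+1$ values. The inequality $\ell+1\le n-m+1$ that you write next is true, but for a different reason. If $d$ is the number of distinct columns, the class of size $m$ and the other $d-1$ nonempty classes give $m+d-1\le n$, and $\ell+1\le d$. Your displayed inequality then has right-hand side $n+n^2(n+1)/2$, so it only reproduces the crude bound. The assertion that an estimate ``of this kind'' gives $N\le n^2(n+1)/2$ is therefore unsupported.

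The missing step is short. The two factors of $m(n-m+1)$ sum to $n+1$, so $m(n-m+1)\le\bigl(\tfrac{n+1}{2}\bigr)^2$, and $n-m+2\le n+1$. Hence
\[
N\le n+\frac{m(n-m+1)(n-m+2)}{2}\le n+\frac{(n+1)^3}{8}.
\]
Moreover,
\[
\frac{n^2(n+1)}{2}-\frac{(n+1)^3}{8}=\frac{(n+1)(n-1)(3n+1)}{8}\ge n \quad\text{for } n\ge 2,
\]
so every query has length at most $n^2(n+1)/2$ when $n\ge 2$. For $n=1$ the single query has length $2$, so $n\ge 2$ is really needed, but that case is trivial. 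With this computation replacing your last paragraph on length, the proof is complete.
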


%% file: Permutation.tex
\section{Search to decision reduction for LCE}\label{sec:LEC-perm-search}

%\subsection{Search-LCE to LCE Reduction for Hamming-Metric Codes}

Let $\C_1,\C_2$ be two $[n,k,d]_q$ codes. In this section, we show how to reduce the problem of finding $\pi\in\mathcal{S}_n$ such that there exists $\tau = (\pi,v)\in  \mathcal{S}_n\rtimes \F_q^{*n}$ for some $v\in \F_q^{*n}$ with $\C_2 = \tau(\C_1)$ to LCE.

Now let $F,G$ be generator matrices of two $[n,k]_q$-linear codes that are linearly equivalent. They satisfy the relation $SGPD = F$ for some invertible matrix $S$, permutation matrix $P$, and diagonal matrix $D$ with non-zero diagonal entries. We define the equivalence relation $\sim$ on $v_1, v_2 \in \mathbb{F}_q^n$ by $v_1 \sim v_2 \iff \exists \lambda \in \mathbb{F}_q^* : v_1 = \lambda v_2$. By proportionality, we shall mean equivalence by $\sim$.

We begin by introducing a property analogous to Lemma \ref{lemma:aux}. Notice that if columns $G_{t_1},\ldots, G_{t_j}$ of $G$ are mutually proportional, the columns $F_{P(t)},F_{P(t_1)},\ldots, F_{P(t_k)}$ will be as well. Thus, a permutation $P'$ equal to $P$ on inputs other than $t,t_1,\ldots,t_k$ results in $SGP'$ being proportional to $G$. Unlike previously, these columns are not guaranteed to be identical. However, due to proportionality, further rescaling of the columns is enough to cause them to match. This means that there exists a additional diagonal matrix $D'$ identical to $D$ on all inputs other than $t_1, \ldots , t_j$ such that $SGP'D' = F$.

\begin{comment}
Through this property, we can construct a new solution by starting with any permutation matrix $P'$ that is identical to $P$ on all inputs other than $t_1, \ldots , t_j$. As a result, the column list $(SGP')_{P(t_1)},\ldots, (SGP')_{P(t_j)}$ is the same as $(SGP)_{P(t_1)},\ldots, (SGP)_{P(t_j)}$, although each $(SGP')_{P(t_i)}$ may not equal $(SGP)_{P(t_i)}$ in general. It is true, however, that $(SGP')_{P(t_1)},\ldots, (SGP')_{P(t_j)}$ are proportional to $(SGP)_{P(t_1)},\ldots, (SGP)_{P(t_j)}$ respectively. Therefore, further rescaling is enough to cause the columns to match, which we can ensure by carefully modifying the diagonal matrix that multiplies $SGP'$. That is, there then exists a diagonal matrix $D'$ identical to $D$ on all inputs other than $t_1, \ldots , t_j$ such that $SGP'D' = F$.
\end{comment}

\begin{lemma} \label{lem:newpermutation}
Let $F,G$ be generator matrices of two $[n,k]_q$-linear codes such that there exist a non-singular $S \in GL_k(\mathbb{F}_q)$, permutation matrix $P\in \{0,1\}^{n\times n}$, and diagonal matrix $D \in \mathbb{F}_q^{* n\times n}$ satisfying $SGPD = F$. Let $\mathcal{L} = \{t_1,\ldots, t_j\}$ be a set of indices of proportional columns of $F$. Let $\mathcal{T} = P^{-1}(\mathcal{L})$. Then for every bijection $\sigma: \mathcal{T} \to \mathcal{L}$, there exists a permutation matrix $P' \in \{0,1\}^{n \times n}$ such that $P'$ restricted to $\mathcal{T}$ is $\sigma$, $P'$ restricted to $\{1,\ldots,n\}\setminus\mathcal{T}$ is $P$, and $SGP'D' = F$ for some diagonal matrix $D'\in \mathbb{F}_q^{*n\times n}$.
\end{lemma}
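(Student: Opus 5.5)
I will follow the proof of Lemma~\ref{lemma:aux}, composing $P$ with a permutation supported on $\mathcal{T}$ and then absorbing the resulting proportionality factors into the diagonal matrix. The one difference from the permutation case is that the columns of $G$ indexed by $\mathcal{T}$ are no longer equal, only proportional. Since $SGPD=F$ and, by the convention of the background section, column $P(i)$ of $GP$ is column $i$ of $G$, we have
\[ F_{P(i)} = D(P(i))\, S G_i \quad \text{for every } i. \]
Because $D$ has nonzero diagonal entries and $S$ is invertible, the columns $F_\ell$ with $\ell\in\mathcal{L}$ are pairwise proportional exactly when the columns $G_i$ with $i\in\mathcal{T}=P^{-1}(\mathcal{L})$ are. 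So I first record that for any $i,i'\in\mathcal{T}$ there is $\lambda_{i,i'}\in\F_q^*$ with $G_{i'}=\lambda_{i,i'}G_i$.

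Next I construct $P'$. Given the bijection $\sigma:\mathcal{T}\to\mathcal{L}$, let $R$ be the permutation with $R(i)=P^{-1}(\sigma(i))$ for $i\in\mathcal{T}$ and $R(i)=i$ otherwise. This is a permutation supported on $\mathcal{T}$, since $P^{-1}\circ\sigma$ maps $\mathcal{T}$ bijectively onto $\mathcal{T}$. Set $P'=RP$ as matrices, exactly as in Lemma~\ref{lemma:aux}, so that $P'(i)=P(R(i))=\sigma(i)$ on $\mathcal{T}$ and $P'(i)=P(i)$ off $\mathcal{T}$. This gives the required restrictions.

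Then I define $D'$ column by column. For $\ell\notin\mathcal{L}$, take $D'(\ell)=D(\ell)$; the column of index $\ell$ of $SGP'$ is $SG_{P^{-1}(\ell)}$, the same as before, so $(SGP'D')_\ell=F_\ell$. For $\ell=\sigma(i)\in\mathcal{L}$, column $\ell$ of $SGP'$ is $SG_i$, while
\[ F_\ell = D(\ell)\, S G_{P^{-1}(\ell)} = D(\ell)\,\lambda_{i,P^{-1}(\ell)}\, S G_i. \]
So I take $D'(\ell)=D(\ell)\lambda_{i,P^{-1}(\ell)}$, which is nonzero. This yields $SGP'D'=F$, and $D'$ agrees with $D$ off $\mathcal{L}$.

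I expect no real obstacle. The only delicate point is the bookkeeping of indices under the column-permutation convention: whether scalars attach to positions in $\mathcal{L}$ or in $\mathcal{T}$, and the orientation of $P'=RP$ versus $PR$. I would check this against the explicit rule "column $\pi(i)$ of $AP$ is column $i$ of $A$" to make sure the composition restricts to $\sigma$ on $\mathcal{T}$. One degenerate case needs a remark: if some $G_i$ with $i\in\mathcal{T}$ is zero, then all columns indexed by $\mathcal{L}$ are zero. The codes have dimension $k$, so this can occur, but then any nonzero scalar works for $D'(\ell)$ and the identity $F_\ell=0=(SGP'D')_\ell$ holds trivially.
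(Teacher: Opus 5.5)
Your proof is correct and is essentially the paper's argument: it uses the same permutation $R$ supported on $\mathcal{T}$ (the paper's $\tilde{P}$) and the same $P'=RP$. The only difference is cosmetic. The paper packages the rescaling as a diagonal $\tilde{D}$ with $G\tilde{P}\tilde{D}=G$ and sets $D'=P^{-1}\tilde{D}PD$, which gives exactly your column-by-column choice, $D'(\ell)=D(\ell)\lambda_{i,P^{-1}(\ell)}$ for $\ell=\sigma(i)\in\mathcal{L}$ and $D'(\ell)=D(\ell)$ otherwise.
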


\begin{proof}
Begin by fixing an arbitrary bijection $\sigma: \mathcal{T} \to \mathcal{L}$. Let the permutation matrix $\tilde{P} \in \{0,1\}^{n\times n}$ be such that $\tilde{P}(i) = P^{-1}(\sigma(i))$ for all $i \in \mathcal{T}$ and $\tilde{P}(i) = i$ for all $i \notin \mathcal{T}.$ Note that $\tilde{P}(\mathcal{T}) = \mathcal{T}$. Now construct a diagonal matrix $\tilde{D} \in \mathbb{F}_q^{*n\times n}$ such that for columns $G_1, \ldots, G_n$ of $G$ we have $G_i = \tilde{D}(i)G_{\tilde{P}^{-1}(i)}$ for all $i \in \mathcal{T}$ and $\tilde{D}(i) = 1$ for all $i \notin \mathcal{T}$. Note that the first restriction on $\tilde{D}$ is always satisfiable because the columns indexed by $\mathcal{T}$ are proportional.

Notice now that $G\tilde{P}\tilde{D} = G$ since for all $i \in \mathcal{T}$
$$(G\tilde{P})_{\tilde{P}(i)} = G_i\implies (G\tilde{P})_i = G_{\tilde{P}^{-1}(i)}\implies (G\tilde{P}\tilde{D})_i = \tilde{D}(i)G_{\tilde{P}^{-1}(i)} = G_i,$$
 and $G_i = (G\tilde{P})_i = (G \tilde{P}\tilde{D})_i$ for all $i \notin \mathcal{T}.$

Now consider the diagonal matrix $D' = P^{-1}\tilde{D}PD$ and permutation matrix $P' = \tilde{P}P$. By definition of $\tilde{P}$, we have that $\forall i \in \mathcal{T}$, $P'(i) = P(\tilde{P}(i)) = P(P^{-1}(\sigma(i))) = \sigma(i)$, and $\forall i\notin\mathcal{T}$, $P'(i) = P(\tilde{P}(i))=P(i)$. Finally, we obtain
$$SGP'D' = SG\tilde{P}PP^{-1}\tilde{D}PD = SG\tilde{P}\tilde{D}PD = SGPD = F.$$

\end{proof}

%Compare this lemma with Lemma III.1. The difference is that we consider proportional rather than equal columns, and this allows us to guarantee a new diagonal $D'$ for the linear equivalence. 
We have shown that any solution $S,P,D$ to $SGPD = F$ may be converted to a new solution $S,P',D'$ where $P'$ bijectively maps, in any chosen way, the indices of one set of proportional columns of $G$ to the indices of the corresponding set of proportional columns of $F$ under the action of $P$. Moreover, we have shown in the proof of Lemma~\ref{lem:newpermutation} that $P'$ could be constructed independently of $D$. This suggests that when searching for a solution $P,D$, we can find $P$ first, and then retrieve $D$. As we shall show, our strategy allows us to first find a valid $P$ by use of an LCE oracle. We then use $P$ to compute $D$.

\begin{comment}
that we may first fix how these column indices are permuted and only then rescale so that the bijection between columns holds. This suggests that when searching for a solution $P,D$, we can find $P$ independently of $D$. As we shall show, we can indeed first find a valid $P$ by use of an oracle that decides LCE. We then use $P$ to manually compute $D$.
\end{comment}

% Permutation finding method/proof begins here:

We denote by $\mathcal{P}(g,G)$ the set of columns in a matrix $G$ that are proportional to a column vector $g,$ and $\mathcal{IP}(g,G)$ the set of indices of all columns in $G$ that are proportional to a column vector $g$. 
%We also abuse the notation for permutations where sometimes it means a permutation of the columns and other times it means a permutation of the indices of the columns. Although, it will be clear which meaning we are using based on what the input to the permutation is (either a number or a column). 
Before we state a key theorem, we need an important remark.
\begin{rmk} \label{rmk:mapsColumnsToColumns}
    Let $G$ and $F$ be $k \times n$ matrices over $\F_q$  that generate linearly equivalent codes via the relation $SGQ = F$ with $Q = PD$ for some diagonal matrix $D$ and permutation matrix $P$. Let $\mathcal{S}$ be a maximal set of indices of proportional columns in $G$ (i.e., if we added the index of another column of $G$ to $\mathcal{S}$, it would no longer be a set indices of proportional columns in $G$). Let $s := |\mathcal{S}|$. Then $\mathcal{S}$ gets mapped to a maximal set of indices $P(\mathcal{S})$ of proportional columns  in $F$ such that $|P(\mathcal{S})| = s$. Furthermore, if $\mathcal{S}$ is the only maximal set of indices of $s$ proportional columns in $G$, and $\mathcal{T}$ is the only maximal set of indices of $s$ proportional columns in $F$, then it must be the case that $P(\mathcal{S}) = \mathcal{T}$.
\end{rmk}

Algorithm~\ref{alg:perm-to-LCE} shows how to compute a permutation matrix $P$ such that there are $S'\in GL_k(\F_q)$ and a diagonal matrix $D$ with $F = S'GPD$ given access to an LCE oracle. The overall strategy is similar to that of Algorithm~\ref{alg:search-to-decision-PCE}. The key difference is that instead of identifying sets of identical columns, we identify sets of columns that are proportional. 

\begin{algorithm}[hbt!]
\caption{Computing $\pi$ with an LCE oracle.}
\begin{algorithmic}[1]\label{alg:perm-to-LCE}
  \REQUIRE Generator matrices $F,G$ of two $[n,k,d]_q$ codes such that there are $S\in GL_k(\F_q)$ and \\
  a monomial matrix $Q=PD$ with $F = SGQ$.
  \ENSURE A permutation matrix $P$ such that there are $S'\in GL_k(\F_q)$ and a diagonal matrix $D$\\ with $F = S'GPD$.
\STATE $m\leftarrow$ maximum number of columns in $G$ that are proportional to each other. 
\STATE $P\leftarrow 0^{n\times n}$, $\mathcal A\leftarrow \{\}$, $\mathcal B\leftarrow\{\}$, $\mathcal{F}\leftarrow\{\}$, $\mathcal{G}\leftarrow \{\}$.
\FOR{$i=1,\ldots,n$}
\IF{$g_i$ is not proportional to an element of $\mathcal{G}$}
\STATE Let $\{G_{t_1},\ldots,G_{t_\ell}\}=\mathcal{G}$, and $\{F_{P(t_1)},\ldots,F_{P(t_\ell)}\}=\mathcal{F}$.
\STATE $G^{(i-1)}\leftarrow G$, $F^{(i-1)}\leftarrow F$.
\FOR{$h\leq \ell$}
\STATE Append $h m$ copies of $G_{t_h}$ at the end of $G^{(i-1)}$.
\STATE Append $h m$ copies of $F_{P(t_h)}$ at the end of $F^{(i-1)}$.
\ENDFOR
\FOR{$j\notin B$}
\STATE $G^{(i)}\leftarrow G^{(i-1)}$, $F^{(i)}\leftarrow F^{(i-1)}$.
\STATE Append $(\ell+1) m$ copies of $G_{i}$ at the end of $G^{(i)}$.
\STATE Append $(\ell+1) m$ copies of $G_{j}$ at the end of $F^{(i)}$.
\IF{$G^{(i)}\stackrel{L}{\approx} F^{(i)}$}
\STATE $\mathcal A\leftarrow\mathcal A\cup \{i\}$, $\mathcal B\leftarrow\mathcal B\cup \{j\}$, $\mathcal{G}\leftarrow \mathcal{G}\cup \{G_i\}$, $\mathcal{F}\leftarrow \mathcal{F}\cup \{F_j\}$, $P_{j,i}\leftarrow 1$.
\STATE \textbf{break}
\ENDIF
\ENDFOR
\ELSE
\STATE Let $t_h$ such that $G_i$ is proportional to $G_{t_h}\in\mathcal{G}$. 
\STATE Find $j\notin B$ such that $F_j$ is proportional to $F_{P(t_h)}$.
\STATE  $\mathcal A\leftarrow \mathcal A\cup \{i\}$, $\mathcal B\leftarrow \mathcal B\cup \{j\}$, $P_{j,i}\leftarrow 1$.
\ENDIF
\ENDFOR
\RETURN $P$
\end{algorithmic}
\end{algorithm}

\begin{thm}
Let $G$ and $F$ be $k \times n$ matrices over $\F_q$ that generate linearly equivalent codes via the relation $SGQ = F$ with $Q = PD.$ Then, Algorithm~\ref{alg:perm-to-LCE}  computes a permutation $P'$ such that there is an invertible matrix $S'$ and a diagonal matrix $D'$ satisfying $F = S'GP'D'$.% by using an oracle for LCE.
\end{thm}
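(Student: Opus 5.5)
We argue by induction on the loop index $i$ of Algorithm~\ref{alg:perm-to-LCE}. The invariant is: after processing indices $1,\ldots,i$, there exist $S^{(i)}\in GL_k(\F_q)$, a permutation matrix $P^{(i)}$ and an invertible diagonal matrix $D^{(i)}$ with $S^{(i)}GP^{(i)}D^{(i)}=F$ and $P^{(i)}(t)=P(t)$ for every $t\in\mathcal A$, where $P$ is the partial assignment built so far. The base case is the hypothesis $SGQ=F$. At termination $\mathcal A=\{1,\ldots,n\}$, so $P^{(n)}$ coincides with the returned matrix, giving $P'=P^{(n)}$, $S'=S^{(n)}$ and $D'=D^{(n)}$. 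Throughout, we read line~14 of the algorithm as appending copies of $F_j$, since that is clearly what is intended.

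\textbf{The ``else'' branch.} Suppose $G_i$ is proportional to some $G_{t_h}\in\mathcal G$. Under $P^{(i-1)}$, the maximal class $\mathcal{IP}(G_{t_h},G)$ is mapped onto $\mathcal{IP}(F_{P(t_h)},F)$, and the two classes have equal size by Remark~\ref{rmk:mapsColumnsToColumns}. Hence some $j\notin\mathcal B$ in that class exists. Apply Lemma~\ref{lem:newpermutation} with $\mathcal L=\mathcal{IP}(F_{P(t_h)},F)$ and a bijection $\sigma$ that agrees with $P^{(i-1)}$ on the already fixed indices of the class and sends $i\mapsto j$. This yields $P^{(i)},D^{(i)}$ preserving the invariant.

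\textbf{The oracle branch.} This is the analogue of the two propositions of Section~\ref{sec:PCE-red}, with ``equal'' replaced by ``proportional''. Let $G^{(i)},F^{(i)}$ be the padded matrices. For soundness, if $j=P^{(i')}(i)$ for some valid $P^{(i')}$ extending the current partial assignment, extend $P^{(i')}$ by the identity on the padded indices. Extend the diagonal matrix by the scalars $\lambda$ with $S^{(i')}G_{t_h}=\lambda F_{P(t_h)}$ (and likewise for $G_i$); these scalars exist because $SGPD=F$ forces each $S G_t$ to be proportional to $F_{P(t)}$. This shows that the oracle answers positively, so some $j$ is always accepted. For the converse, suppose the oracle accepts, with $S'G^{(i)}P''D''=F^{(i)}$. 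The proportionality classes of $G^{(i)}$ have sizes $s+hm$, where $s\le m$ is the original class size. The padding multiplicities $m,2m,\ldots,(\ell+1)m$ therefore make the class of each $G_{t_h}$ (and of $G_i$) the unique class of its size, in the range $(hm,(h+1)m]$. The class of $F_{P(t_h)}$ (respectively $F_j$) is then the only class of matching size in $F^{(i)}$. By Remark~\ref{rmk:mapsColumnsToColumns}, $P''$ maps each such class onto the corresponding one. Using Lemma~\ref{lem:newpermutation} repeatedly, once per class, we modify $P''$ to fix the padded indices and to send $t\mapsto P(t)$ for $t\in\mathcal A$ and $i\mapsto j$. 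Its restriction to $\{1,\ldots,n\}$ then maps $\{1,\ldots,n\}$ to itself, and together with the restricted diagonal it gives a valid $P^{(i)}$.

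\textbf{Main obstacle.} The delicate point is that for $t\in\mathcal A$ lying in the class of some $G_{t_h}$ but not equal to $t_h$ (indices fixed in earlier ``else'' steps), the prescribed images $P(t)$ must also be restored. Lemma~\ref{lem:newpermutation} allows an arbitrary bijection within each proportionality class. So for each class we choose one bijection that sends every fixed index to its recorded image, sends the padded indices to themselves, and maps the remaining original indices to the remaining original indices. Cardinality counting shows such a bijection exists, and applying the lemma class by class, each time with a new diagonal matrix, completes the argument. The size bound and the count of $n^2$ oracle calls then follow exactly as in the corollary of Section~\ref{sec:PCE-red}.
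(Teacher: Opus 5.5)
Your proposal is correct and follows essentially the same route as the paper. Both argue by induction over the loop. The non-oracle case is handled by Remark~\ref{rmk:mapsColumnsToColumns} together with Lemma~\ref{lem:newpermutation}. The oracle case rests on the $hm$-fold padding that makes each relevant proportionality class unique in size: soundness comes from extending by the identity on padded indices with suitable diagonal scalars, and completeness comes from normalizing the permutation via Lemma~\ref{lem:newpermutation} and then restricting to $\{1,\ldots,n\}$. Your explicit class-by-class bijection, which restores every previously fixed image including those set in earlier ``else'' steps, is a cleaner version of the step the paper dispatches with ``by induction, $P'^{(T-1)}$ can be assumed to agree on $\mathcal A_{T-1}$''.
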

\begin{proof}
 \textbf{First iteration:} We show how to compute the image of one element of $\{1,\cdots, n\}$ via a $P$ that satisfies $SGPD=F$ for some diagonal matrix $D$ by using a decisional oracle. Without loss of generality, we need to compute the image of $1.$ Let $m$ be the maximum number of columns in $G$ that are proportional to each other. Let $i_1 \in \{1,\cdots,n\}$. We want to test if there exist $P$ and $D$ satisfying $SGPD=F$ and $P(1)=i_1$. Let $G_1, G_2, \cdots, G_n$ be the columns of $G$ and let $F_1, F_2, \cdots, F_n$ be the columns of $F$. Create a new matrix $G^{(1)}$ by appending $m$ copies of $G_1$ to $G$ and create a new matrix $F^{(1)}$ by appending $m$ copies of $F_{i_1}$ to $F.$ We ask the LCE oracle if the codes generated by $G^{(1)}$ and $F^{(1)}$ are linearly equivalent. 
\begin{itemize} 
\item If the oracle says "no", then there is no monomial matrix $PD$ realizing the linear equivalence between the codes generated by $G$ and $F$ such that $P(1)=i_1$. Indeed, if there was such a linear equivalence between $G$ and $F$, then one could extend this linear equivalence to a linear equivalence between $G^{(1)}$ and $F^{(1)}$ by creating an $(n+m) \times (n+m)$ permutation $P^\prime$ matrix and diagonal matrix $D^\prime$ such that $P^\prime(i) = P(i)$ and $D^\prime(i) = D(i)$ for $i \in \{1, \cdots, n\},$ and $P^\prime(j) = j$ and $D^\prime(j) = D(i_1)$ for $j \in \{n+1,\cdots, n+m\}.$ But this equivalence between $G^{(1)}$ and $F^{(1)}$ should not exist according to the LCE oracle.
\item If the oracle says "yes", then let $P^\prime$ and $D^\prime$ be a permutation and diagonal matrix satisfying the linear equivalence between $G^{(1)}$ and $F^{(1)},$ and let $S^\prime$ be an invertible matrix such that $S^\prime G^{(1)}P^\prime D^\prime = F^{(1)}.$ Let $\ell:=|\mathcal{P}(G_1,G)|$, $X := \mathcal{P}(G_1,G^{(1)})$ and $Y := \mathcal{P}(F_{i_1},F^{(1)}).$ Because $X$ is the only maximal set of $\ell+m$ pairwise-proportional columns in $G^{(1)}$ and $Y$ is the only maximal set of $\ell+m$ pairwise-proportional columns in $F^{(1)}$, by Remark~\ref{rmk:mapsColumnsToColumns} we have that $P^\prime(X) = Y$. Let $\sigma: \mathcal{IP}(G_1,G^{(1)}) \to \mathcal{IP}(F_{i_1},F^{(1)})$ be a permutation such that $\sigma(1) = i_1$ and $\sigma(j) = j$ for all $j \in \{n+1, \cdots, n+m\}$. By Lemma~\ref{lem:newpermutation} applied to $\sigma$ and $\mathcal{L} = \mathcal{IP}(F_{i_1},F^{(1)})$, there exist a permutation $P'^{(1)},$ a diagonal matrix $D'^{(1)},$ and an invertible matrix $S$ such that $SG^{(1)}P'^{(1)}D'^{(1)} = F^{(1)}$, $P'^{(1)}(1)= i_1$, and $P'^{(1)}|_{\mathcal{IP}(G_1,G^{(1)})} = \sigma$. It follows that there is a linear equivalence given by $S$, the permutation $P^{(1)} := P'^{(1)}|_{\{1,\cdots, n\}}\in\mathcal{S}_n$, and $n \times n$ diagonal matrix $D^{(1)} := \diag(D'^{(1)}(1),\cdots, D'^{(1)}(n))$. Hence, we have shown the existence a monomial matrix $P^{(1)}D^{(1)}$ that defines a linear isometry between $G$ and $F$ such that $P^{(1)}(1) = i_1$. 
\end{itemize}
  \textbf{$T^{th}$ iteration:} Suppose by induction that we have established the existence of a monomial matrix $Q^{(T-1)}=P^{(T-1)}D^{(T-1)}$ where $D^{(T-1)}$ is diagonal and $P^{(T-1)}$ is a permutation matrix such that $S^{(T-1)}G Q^{(T-1)}=F^{(T-1)}$ for $S^{(T-1)}\in\operatorname{GL}_k(\F_q)$, and that the images of $T-1$ elements in the set $\{1, \cdots, n\}$ under the permutation $P^{(T-1)}$ are known. Let $\mathcal A_{T-1}$ be the set of these $T-1$ elements. Let $$\mathcal{G}_{\mathcal A_{T-1}} := \{G_i : i \in \mathcal A_{T-1} \text{ and } \forall i \ne j \in \mathcal A_{T-1}, G_i \notin \mathcal{P}(G_j,G)\}.$$ 
  That is, $\mathcal{G}_{\mathcal A_{T-1}} = \{G_{t_1}, G_{t_2}, \cdots, G_{t_\ell}\}$ is the set of nonproportional columns in $G$ whose image under $P^{(T-1)}$ is known. Similarly define $\mathcal{F}_{\mathcal A_{T-1}} := \{F_{P(t_1)}, F_{P(t_2)},\cdots, F_{P(t_\ell)}\}.$ Note that we need to consider the nonproportional columns to ensure that the number of repetitions per appended column is unique. Let $t \in \{1,2,\cdots, n\} \setminus \mathcal A_{T-1}.$ We now show how to find the image $i_t$ of $t$ via a new permutation $P^{(T)}\in\mathcal{S}_n$ such that $P^{(T)}(i) = P^{(T-1)}(i)$ for $i \in \mathcal A_{T-1}$ and such that there is an invertible matrix $S^{(T)}$ and a diagonal matrix $D^{(T)}$ satisfying $F = S^{(T)}G Q^{(T)}$ for $Q^{(T)} = P^{(T)}D^{(T)}$.
  
\textbf{Case 1:} Suppose that $G_t \in \mathcal{P}(G_i,G)$ for some  $i \in \mathcal A_{T-1}.$ Observe that $\mathcal{IP}(G_i,G) \setminus \mathcal A_{T-1}$ is nonempty because $t$ belongs to this set. Therefore, there exists an $F_{i_t} \in \mathcal{P}(F_{P^{(T-1)}(i)},F)$ such that $i_t \notin P(\mathcal A_{T-1})$. Indeed, by Remark~\ref{rmk:mapsColumnsToColumns}, since $\mathcal{IP}(G_i,G)$ and $\mathcal{IP}(F_{P^{(T-1)}(i)},F)$ are maximal sets of proportional columns of $G$ (resp. $F$), we have $P(\mathcal{IP}(G_i,G)) = \mathcal{IP}(F_{P^{(T-1)}(i)},F)$. Hence, $\mathcal{IP}(F_{P^{(T-1)}(i)},F) \setminus P(\mathcal A_{T-1})\neq\emptyset$. Consider a permutation 
%$$\sigma: \mathcal A_{T-1} \cup \mathcal{IP}(G_i,G) \to P(\mathcal A_{T-1}) \cup \mathcal{IP}(F_{P^{(T-1)}(i)},F)$$ 
$$\sigma: \mathcal{IP}(G_i,G) \to  \mathcal{IP}(F_{P^{(T-1)}(i)},F)$$ 
such that $\sigma(t) = i_t$ and $\sigma(j) = P(j)$ for all $j \in \mathcal A_{T-1}\cap  \mathcal{IP}(G_i,G).$ We can apply Lemma~\ref{lem:newpermutation} to $P^{(T-1)}$, $D^{(T-1)}$, $\sigma$ to prove the existence of a permutation $P^{(T)}$ such that $P^{(T)}(i) = P^{(T-1)}(i)$ for all $i \in \mathcal A_{T-1}$ and such that $P^{(T)}(t) = i_t,$ with $S^{(T-1)}GP^{(T)}D^{(T)} = F$ for some diagonal matrix $D^{(T)}.$ Observe that in this case, one does not need to call an oracle for LCE.\newline
 
\textbf{Case 2:} Suppose that $G_t$ is not proportional to any column in the set $\mathcal{G}_{T-1}.$ Let $i_t \in \{1,\cdots, n \} \setminus P(\mathcal A_{T-1}).$ As before, let $m$ be the maximum number of columns in $G$ that are proportional to each other. Create a new matrix $G^{(T)}$ by appending to $G$: $m$ copies of $G_{t_1},$ $2m$ copies of $G_{t_2}$, $\cdots$, $h\ell m$ copies of $G_{t_\ell},$ and $(\ell+1)m$ copies of $G_t$ and similarly create a new matrix $F^{(T)}$ by appending to $F$: $m$ copies of $F_{P(t_1)},$ $2m$ copies of $F_{P(t_2)}$, $\cdots$, $\ell m$ copies of $F_{P(t_\ell)},$ and $(\ell+1)m$ copies of $F_{i_t}.$ Observe that we have appended $\displaystyle{\sum_{h=1}^{\ell+1} hm = \frac{m(\ell+1)(\ell+2)}{2}}$ columns to $G.$ Now, we ask the LCE oracle if the codes generated by $G^{(T)}$ and $F^{(T)}$ are linearly equivalent. 

 If the oracle says "no", then $P^{(T)}(t) \ne i_t$ for any permutation matrix that defines a linear equivalence between $G$ and $F$ and that satisfies $P^{(T)}(i) = P^{(T-1)}(i)$ for all $i \in \mathcal A_{T-1}.$ If there were such a linear equivalence relation between $G$ and $F$, i.e. $F = S^{(T)}GP^{(T)}D^{(T)}$ for an invertible matrix $S^{(T)}$, a permutation matrix $P^{(T)}$ and a diagonal matrix $D^{(T)}$, then one could extend this linear equivalence to a linear equivalence between $G^{(T)}$ and $F^{(T)}$ by creating a permutation matrix $P'^{(T)}\in\Sn_{N_T}$ (where $N_T$ is the number of columns of $G^{(T)}$) and an $N_T\times N_T$ diagonal matrix $D'^{(T)}$ such that $P'^{(T)}(i) = P^{(T)}(i)$ and $D'^{(T)}(i) = D^{(T)}(i)$ for $i \in \{1, \cdots, n\},$ $P'^{(T)}(j) = j$ for all $j > n,$ and for all $i \in \{0,1,\cdots, \ell\}$ we have that $D'^{(T)}(j) = D^{(T)}(t_i)$ for all $\displaystyle{j \in \left\{n+\frac{mi(i+1)}{2}+1,\cdots, n+\frac{m(i+1)(i+2)}{2}\right\}}.$ This contradicts the assumption that no linear equivalence relation exists between $G^{(T)}$ and $F^{(T)}$.
 
If the oracle says "yes", then there exist an invertible matrix $S'^{(T)}$, a permutation matrix $P'^{(T)}\in\Sn_{N_T}$, and a diagonal matrix $D'^{(T)}$ such that 
$S^{(T)}G^{(T)}P'^{(T)}D'^{(T)} = F^{(T)}$. Let $P^{(T-1)}\in\Sn_n$ such that there are an invertible $S^{(T-1)}$ and a diagonal matrix $D^{(T-1)}$ with $F = S^{(T-1)}GP^{(T-1)}D^{(T-1)}$ and for which we know the images at the indices in $\mathcal A_{T-1}$. We want to show that $P^{(T-1)}$ can be extended to a $P^{(T)}\in\Sn_n$ such that $P^{(T)}(t)=i_t$ and $F = S^{(T)}GP^{(T)}D^{(T)}$ for an invertible matrix $S^{(T)}$ and a diagonal matrix $D^{(T)}$. As in the case of the search to decision reduction for PCE, we need to argue that there is a $P'^{(T)}$ defining the linear equivalence relation between $G^{(T)}$ and $F^{(T)}$ satisfying the following properties: 
\begin{itemize}
    \item it agrees with $P^{(T-1)}$ on $\mathcal A_{T-1}$,
    \item it maps $t$ to $i_t$,
    \item it maps $\{1,\ldots,n\}$ to $\{1,\ldots,n\}$.
\end{itemize}
If such a $P'{(T)}$ exists, then its restriction to $\Sn_n$ gives us $P^{(T)}$, and the diagonal matrix $D^{(T)} = \diag(D'{(T)}(1),\ldots,D'^{(T)}(n))$ satisfies the desired linear equivalence relation. 

Observe that by construction of $G^{(T)},$ for any $h \in \{1,2,\cdots, \ell\}$, we have that $s_h := |\mathcal{P}(G_{t_h},G^{(T)})|= |\mathcal{P}(F_{P'^{(T)}(t_h)},F^{(T)})|$, and that $s_h$ is unique (in the sense that no other maximal set of proportional columns in $G^{(T)}$ has size $s_h$ and no other maximal set of proportional columns in $F^{(T)}$ has size $s_h$). In particular, according to Remark~\ref{rmk:mapsColumnsToColumns}, this means that we must have 
$P'^{(T)}(\mathcal{IP}(G_{t},G^{(T)}) = \mathcal{IP}(F_{P'^{(T)}(t)},F^{(T)}) = \mathcal{IP}(F_{i_t},F^{(T)})$. By applying Lemma~\ref{lem:newpermutation} to $P'^{(T)}$, $D'^{(T)}$, and $\sigma$ defined below, we can assume that 
\begin{itemize}
    \item $P'^{(T)}(t)=i_t$
    \item $P'^{(T)}(h) = h$ for $h\in\{N_T-(\ell+1)m+1,\ldots,N_T\}$ (i.e. $h$ is in the block of the last $(\ell+1)m$ indices).
    \item $P'^{(T)}(h) = \sigma(h)$ for any bijection 
    $$
    \sigma: \{\text{indices of columns of $G$ equal to $G_t$}\}\setminus \{t\}\rightarrow \{\text{indices of columns of $F$ equal to $F_{i_t}$}\}\setminus\{i_t\}.
    $$
\end{itemize}
Moreover, such permutation $P'^{(T)}$ maps $\{1,\ldots,N_T-(\ell+1)m\}$ to $\{1,\ldots,N_T-(\ell+1)m\}$. Hence, its restriction to $\{1,\ldots,N-(\ell+1)m\} = \{1,\ldots,N_{T-1}\}$ is a permutation 
$P'^{(T-1)}\in\Sn_{N_{T-1}}$ satisfying $F^{(T-1)} = S'^{(T-1)}G^{(T-1)}D'^{(T-1)}P'^{(T-1)}$ for $S'^{(T-1)} = S'^{(T)}$ and $D'^{(T-1)} = \diag(D'^{(T)}(1),\ldots,D'^{(N_{T-1})})$. By induction, we know that $P'^{(T-1)}$ can be assumed to agree on $\mathcal{A}_{T-1}$ with $P^{(T-1)}$. We define $P^{(T)}$ as the restriction of $P'^{(T)}$ to $\{1,\ldots,n\}$. It defines a linear equivalence relation between $G$ and $F$, and it satisfies $P^{(T)}(t_h) = i_h$ for all $t_h\in\mathcal A_T$. By induction, the statement hold for all iterations.

\end{proof}

\begin{corollary}\label{cor:poly_perm_lce}
Algorithm~\ref{alg:perm-to-LCE} performs at most $n^2$ calls to an LCE oracle with input codes of dimension $k$ and length at most $n^2(n+1)/2$ over $\F_q$ and eventually returns a permutation matrix $P$ such that there exists an invertible matrix $S$ and a diagonal matrix $D$ with $SGPD=F$. 
\end{corollary}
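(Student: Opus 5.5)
The corollary has three parts: correctness, which is already settled, the count of oracle calls, and the length of the queried codes. Correctness is inherited from the preceding theorem. By induction on the outer loop index $i$, there is always a monomial matrix $P^{(i)}D^{(i)}$ realizing the equivalence that agrees with the partial assignment on $\mathcal A$. In Case 2 the inner loop over $j\notin\mathcal B$ must therefore terminate with a positive answer: some candidate $j=P^{(i-1)}(i)$ makes the oracle answer ``yes'' by the ``no''-branch argument of the theorem. After the $n$-th outer iteration, $\mathcal A=\{1,\ldots,n\}$, and $P$ is a full permutation matrix for which some $S,D$ exist with $SGPD=F$. What remains is to count oracle calls and bound the length of the queries.

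\textbf{Number of calls.} Oracle calls only occur in Case 2, inside the inner loop over $j\notin\mathcal B$. At outer step $i$ we have $|\mathcal B|=i-1$, so at most $n-i+1\le n$ candidates $j$ are tried. Each candidate costs exactly one call, and the loop breaks at the first positive answer. Case 1 makes no calls. Summing over the $n$ outer iterations gives at most $\sum_{i=1}^n (n-i+1)=n(n+1)/2\le n^2$ calls. The looser bound $n\cdot n$ also suffices.

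\textbf{Dimension.} Appending columns that are copies of existing columns of $G$ (resp.\ $F$) does not change the row space rank. Hence $G^{(i)}$ and $F^{(i)}$ still have rank $k$ and generate $k$-dimensional codes.

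\textbf{Length.} The queried matrices have $N=n+m\sum_{h=1}^{\ell+1}h=n+m(\ell+1)(\ell+2)/2$ columns. Here $\ell$ counts the pairwise nonproportional classes already handled, and $m$ is the maximal size of a proportionality class. The key observation is that classes are disjoint: $\ell+1$ distinct classes, each of size at least $1$, together with one class of size $m$, fit inside $n$ columns. This gives $\ell+1\le n$, and more sharply $m+\ell\le n$, since the largest class and the other $\ell$ classes are disjoint. Using $m\le n-\ell$, we bound
\[
N\le n+(n-\ell)\frac{(\ell+1)(\ell+2)}{2}.
\]
We must check that this is at most $n^2(n+1)/2$ for $0\le\ell\le n-1$. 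The crude bound $m\le n$, $\ell+1\le n$ gives $n+n\cdot n(n+1)/2$, which slightly overshoots by the additive $n$. This is the one delicate point, so I would use the refined constraint. If $\ell=n-1$, then every class is a singleton, so $m=1$ and $N=n+n(n+1)/2\le n^2(n+1)/2$ for $n\ge2$ (the case $n=1$ is trivial, with no calls needed). If $\ell\le n-2$, then $(n-\ell)(\ell+1)(\ell+2)/2+n\le n\cdot (n-1)n/2+n$, and this is at most $n^2(n+1)/2$ since $n^2(n+1)/2-n^2(n-1)/2=n^2\ge n$.

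I expect this length bookkeeping to be the only step needing care. The call count and the correctness follow directly from the theorem.
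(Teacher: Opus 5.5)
Your proposal is correct and supplies the bookkeeping the paper leaves implicit: the corollary is stated without proof, with correctness inherited from the preceding theorem and the counts read off the construction. Your refinement $m+\ell\le n$ is genuinely needed, since the naive bounds $m\le n$, $\ell+1\le n$ only give $n+n^2(n+1)/2$, which exceeds the stated length. The one inaccuracy is your aside about $n=1$: the algorithm as written still makes one query there, on codes of length $n+m=2>1=n^2(n+1)/2$. So the length bound in the statement actually fails in that degenerate case rather than being vacuous, and your argument (correctly) establishes it only for $n\ge 2$.
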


%% file: Multipliers.tex
\section{Finding the diagonal entries}\label{sec:diagonal-entries}

Let $\mathcal C,\mathcal D$  be two linear codes of length $n$ over $\F_q$ such that there exists a linear isometry $\tau = (\pi,v)\in \mathcal{S}_n\rtimes \F_q^{*n}$ such that $\mathcal D = \tau (\mathcal C)$. Given $\pi$, we show how to efficiently compute $v'\in \F_q^{*n}$ such that 
$\tau' = (\pi,v')\in  \mathcal{S}_n\rtimes \F_q^{*n}$ satisfies $\tau'(\mathcal C) = \mathcal D$. Our main result in this section is Theorem~\ref{thm:diag-coeffs}, which gives a deterministic polynomial-time algorithm for recovering the diagonal part once the permutation is known. This task arises for example in algorithms for solving search-LCE that build on Leon's search-PCE problem~\cite{Leon1982}. Beullens~\cite[Sec. 4]{Beullens20} adapted it to the case of search-LCE and used a backtracking method to recover the linear isometry $\tau$ from several pairs of the form $(V,\tau(V))$ where $V$ is a 2-dimensional subcode. Later,  Barenghi, Biasse, Persichetti and Santini~\cite[Sec. 5.2]{CEcrypto} reformulated Beullen's approach into an explicit two-step process: first compute $\pi\in\mathcal{S}_n$, then derive $v\in\F_q^{*n}$. The approach is given as follows: assume $G\in\F_q^{k\times n}$ is a generator matrix for $\mathcal C$, and $H\in\F_{q}^{(n-k)\times n}$ is a parity-check matrix for $\mathcal{D}$. Assume $P$ is a known permutation matrix and $D$ an unknown diagonal matrix such that $PD$ corresponds to $\tau$. They find the $n$ unknown coefficients of $D$ by noticing that the identity 
$$
G PD H = 0
$$
induces $(n-k)n$ linear equations between them. This approach is given from a very high level perspective in~\cite[Sec. 5.2]{CEcrypto} because it is considered folklore. However, to the best of our knowledge, its careful analysis does not appear anywhere in the literature, and it appears to be heuristic. We recall this method for completeness. 

The purpose of this section is twofold. We first recall the standard approach,
which formulates the recovery of the diagonal entries as a linear system derived
from the identity $GPDH = 0$. While this method is natural from a linear-algebraic
perspective, it is in general only heuristic, as the resulting system may admit
multiple solutions and does not guarantee nonzero diagonal entries. We then discuss a simpler deterministic strategy in the presence of a projective
frame, where proportionality constraints uniquely determine the linear action. We provide some intuition for this approach. In a projective reconstruction,
one typically relies on a column whose coordinates are all nonzero, allowing
all scaling factors to be determined simultaneously. In the absence of such a frame, the proportionality constraints do not determine
the linear action uniquely and may admit degenerate solutions. In contrast, our method
does not require such a column. Instead, it exploits overlaps between the
supports of multiple columns: each column induces relations between scaling
factors on its support, and when these supports form a connected structure,
these local constraints propagate across all coordinates. In this way, the diagonal entries can be recovered consistently without requiring
a projective frame, by propagating local constraints through the support
structure.
%Our main contribution is a different deterministic method, based on row-reduced
%echelon forms and connectivity properties of column supports, which bypasses the
%need for a projective frame and yields a polynomial-time recovery of the diagonal
%part in full generality.

Crucially, one has to identify a solution for $D$ whose coefficients are non-zero. It is typically ensured when the system is overdetermined and has a unique solution. However, in general, we may only assume that it has rank $t\leq n$. The diagonal matrices derived from the space of solutions have the shape $D = \diag(x_1,\ldots,x_t,h_1(\overline{x}),\ldots,h_{n-t}(\overline{x}))$ where $\overline{x} = x_1,\ldots,x_{t}$ and where the $h_i$ are linear functions. Let 
$$
P(x_1,\ldots,x_t) = \left( \prod_{i\leq t}x_i \right)\cdot \left(\prod_{j \leq n-t}h_i(x_1,\ldots,x_t)\right).
$$
The task of finding a solution $D$ with non-zero coefficients reduces to the task of finding $a_1,\ldots,a_i\in\F_q^{*t}$ such that $P(a_1,\ldots,a_t)\neq 0$. If the degree of $P$ in a variable $x_i$ is greater than $q$, we substitute $x_i^q$ by $x_i$ to reduce our task to the search for $(a_1,\ldots,a_t)$ such that $R(a_1,\ldots,a_t)\neq 0$ where $R$ is the reduction of $P$ modulo $I:=\langle x_1^q-x_1,\ldots,x_t^q-x_t\rangle$. In the absence of additional structural assumptions, this step is naturally
viewed as heuristic. Empirically, for typical instances, a small number of
random choices of $(a_1,\dots,a_t)$ suffices to obtain a nonzero evaluation.

A possible rigorous and deterministic way to derive the unknown diagonal coefficients would be to first search for $S\in\mathrm{GL}_k(\mathbb{F}_q)$ such that $F = SGPD$ where $G$ is a generator matrix for $\mathcal C$ and $F$ is a generator matrix for $\mathcal D$ (given $P$ found via the methods of Section~\ref{sec:LEC-perm-search}). Hence, $S$ satisfies the collinearity constraints on the columns of $F$ and $G'$ $F_i\sim SG'_i$ $\forall i\leq n$ where $G' = GP$. If the columns of $G'$ (and hence of $F$) contain a projective frame, Lemma~\ref{lem:pgl-uniqueness} shows that any non-zero solution for $S$ will be invertible (and hence lead to a solution for $D$). The search for $S$ can be linearized by turning each condition $F_i\sim SG'_i$ into $\binom{k}{2}$ homogeneous linear equations given by $(F_i)_r(SG'_i)_s = (F_i)_s(SG'_i)_r$ for each pair of indices $1\leq r<s\leq k$. Solutions to the resulting linear system are exactly the matrices $S$ that satisfy $F = SG'D$ for some diagonal matrix $D$, and from Lemma~\ref{lem:pgl-uniqueness}, a non-zero solution must be invertible.

\begin{lemma}[Projective agreement on a frame forces equality up to scalar]\label{lem:pgl-uniqueness}
Let $g_1,\dots,g_n \in \F_q^k$ be nonzero vectors. Assume that among the projective points
$[g_i]\in \mathbb{P}^{k-1}(\F_q)$ there exists a \emph{projective frame}, i.e., there are indices
$i_0,i_1,\dots,i_k$ such that $g_{i_1},\dots,g_{i_k}$ are linearly independent and, when writing
\[
g_{i_0}=\sum_{j=1}^k a_j g_{i_j}\quad\text{in that basis,}
\]
all coefficients satisfy $a_j\neq 0$.

Suppose there exists $S_{\text{true}}\in \mathrm{GL}_k(\F_q)$ and nonzero scalars $d_1,\dots,d_n\in \F_q^{*}$ such that
\[
g_i' = d_i\, S_{\text{true}}\, g_i\qquad\text{for all } i=1,\dots,n.
\]
Let $S\in \F_q^{k\times k}$ be any matrix satisfying the projective constraints
\[
S g_i \sim g_i' \qquad\text{for all } i=1,\dots,n,
\]
i.e., for each $i$ there exists $\lambda_i\in \F_q$ with $S g_i = \lambda_i g_i'$.
Then either $S=0$, or $S$ is invertible and there exists $\alpha\in \F_q^{*}$ such that
\[
S = \alpha\, S_{\text{true}}.
\]
Equivalently, the induced projective transformation $[S]\in \mathrm{PGL}_k(\F_q)$ (when defined) is unique and
equals $[S_{\text{true}}]$.
\end{lemma}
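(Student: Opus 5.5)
The plan is to reduce to the case $S_{\text{true}}=I$ and then exploit the frame. Set $M := S_{\text{true}}^{-1}S$. The hypotheses give $S g_i = \lambda_i g_i' = \lambda_i d_i S_{\text{true}} g_i$, so
\[
M g_i = \mu_i g_i, \qquad \mu_i := \lambda_i d_i \in \F_q ,
\]
for every $i$. In other words, every $g_i$ is either an eigenvector of $M$ or lies in its kernel (the case $\mu_i=0$). It then suffices to show that $M=\alpha I$ for some $\alpha\in\F_q$. Indeed, if $\alpha=0$ then $S=0$; otherwise $S=\alpha S_{\text{true}}$ is invertible, and the projective statement follows immediately.

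To prove $M=\alpha I$, I would work in the basis $g_{i_1},\dots,g_{i_k}$, which is possible because these vectors are linearly independent. In this basis $M$ is diagonal, namely $M=\diag(\mu_{i_1},\dots,\mu_{i_k})$, since $M g_{i_j}=\mu_{i_j} g_{i_j}$. Next I apply $M$ to the frame vector $g_{i_0}=\sum_j a_j g_{i_j}$. On the one hand,
\[
M g_{i_0} = \sum_j a_j \mu_{i_j} g_{i_j}.
\]
On the other hand, $M g_{i_0}=\mu_{i_0} g_{i_0} = \sum_j \mu_{i_0} a_j g_{i_j}$. Comparing coefficients in the basis gives $a_j(\mu_{i_j}-\mu_{i_0})=0$ for every $j$. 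Since every $a_j\neq 0$, this forces $\mu_{i_j}=\mu_{i_0}=:\alpha$ for all $j$. Hence $M$ acts as $\alpha$ on a basis, so $M=\alpha I$.

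No step here is a real obstacle; the only care needed is to allow $\lambda_i=0$, which the eigenvector formulation with $\mu_i\in\F_q$ (not necessarily nonzero) already covers. For the "equivalently" clause I would note that when $[S]$ is defined we have $S\neq 0$, so $[S]=[\alpha S_{\text{true}}]=[S_{\text{true}}]$. Only the frame vectors $g_{i_0},\dots,g_{i_k}$ are used; the remaining $g_i$ play no role.
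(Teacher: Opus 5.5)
Your proof is correct and is essentially the paper's argument. The paper also sets $T := S_{\text{true}}^{-1}S$, notes $Tg_i = \mu_i g_i$ with $\mu_i = \lambda_i d_i$, writes $T$ as a diagonal matrix in the basis $g_{i_1},\dots,g_{i_k}$, and compares the coefficients of $Tg_{i_0}$ to conclude $T = \mu_{i_0} I_k$, then splits on whether $\mu_{i_0}=0$ just as you do.
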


\begin{proof}
For each $i$ we have $S g_i = \lambda_i g_i' = (\lambda_i d_i)\, S_{\text{true}} g_i$.
Define $T := S_{\text{true}}^{-1} S \in M_k(\F_q)$. Then for all $i$,
\[
T g_i = \mu_i g_i\qquad\text{with }\mu_i := \lambda_i d_i \in \F_q.
\]
In particular, for $j=1,\dots,k$ we have $T g_{i_j} = \mu_{i_j} g_{i_j}$.
Since $g_{i_1},\dots,g_{i_k}$ are a basis, $T$ is diagonal in this basis:
for any $x=\sum_{j=1}^k x_j g_{i_j}$,
\[
T x = \sum_{j=1}^k x_j \mu_{i_j} g_{i_j}.
\]
Now apply this to $g_{i_0}=\sum_{j=1}^k a_j g_{i_j}$ with all $a_j\neq 0$:
on one hand $T g_{i_0} = \mu_{i_0} g_{i_0} = \sum_{j=1}^k (\mu_{i_0} a_j) g_{i_j}$,
and on the other hand
\[
T g_{i_0}=\sum_{j=1}^k a_j \mu_{i_j} g_{i_j}.
\]
Comparing coefficients in the basis $(g_{i_j})_{j=1}^k$ gives $a_j \mu_{i_j} = a_j \mu_{i_0}$ for all $j$,
hence $\mu_{i_j}=\mu_{i_0}$ for all $j$ (since $a_j\neq 0$).
Therefore $T = \mu_{i_0} I_k$ is a scalar matrix.

If $\mu_{i_0}=0$ then $T=0$ and thus $S=0$. Otherwise $T$ is invertible, hence $S$ is invertible and
$S = \mu_{i_0} S_{\text{true}}$.
\end{proof}

\begin{rmk}[Why the frame hypothesis is needed]
If the set $\{[g_i]\}_{i\leq n}$ does \emph{not} contain a projective frame, the conclusion can fail.
For example, if the columns only represent the coordinate axes
$g_1=e_1,\dots,g_k=e_k$, then any diagonal matrix
$T=\mathrm{diag}(c_1,\dots,c_k)$ satisfies $T g_i \sim g_i$ for all $i$,
without being a scalar multiple of the identity when the $c_j$ are not all equal.
\end{rmk}

To circumvent this restriction, we replace the projective frame requirement with a weaker, global condition based on
the supports of the columns. Rather than relying on a single column, we exploit overlaps between the supports of multiple columns. Each column relates the scaling factors on the coordinates in its support, and when these supports form a connected structure, these local relations propagate across all coordinates. In this way, the diagonal entries can be recovered consistently without ever requiring a column with full support.

\begin{lemma}\label{lem:DRREFD}
Let $A \in \mathbb{F}_q^{k \times n}$ be a full-rank matrix with $n > k$ and $D \in \mathbb{F}_q^{*n \times n}$ be a diagonal matrix. Then there exists $D' \in \mathbb{F}_q^{*k \times k}$ such that $RREF(AD) = D'RREF(A)D$.
\end{lemma}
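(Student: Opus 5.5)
Let $R:=RREF(A)$ with pivot columns $p_1<\dots<p_k$. Since $A$ has full rank $k$, there is $S\in\mathrm{GL}_k(\F_q)$ with $R=SA$. We will show that $B:=D'RD$ is row-equivalent to $AD$, is in RREF, and has pivot columns $p_1,\dots,p_k$. The uniqueness of the RREF with prescribed pivots (equivalently, uniqueness of $RREF(AD)$) then gives $RREF(AD)=B$. The natural choice is
\[
D' := \diag\bigl(D(p_1)^{-1},\dots,D(p_k)^{-1}\bigr),
\]
which is diagonal with nonzero entries because every $D(i)\neq 0$.

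\textbf{Row-equivalence.} We have $D'RD = D'S A D = (D'S)(AD)$, and $D'S$ is invertible. Hence $B$ is row-equivalent to $AD$.

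\textbf{$B$ is in RREF.} Right multiplication by $D$ scales column $j$ by $D(j)\neq 0$, and left multiplication by $D'$ scales row $r$ by $D(p_r)^{-1}\neq 0$. Neither operation changes which entries are zero. So the zero pattern of $B$ equals that of $R$, and in particular the leftmost nonzero entry of row $r$ of $B$ still lies in column $p_r$. This gives conditions 2 and 3 of the RREF definition: the pivots are strictly increasing, and each pivot column has zeros elsewhere. Condition 4 is vacuous because the rank is full. For condition 1, the pivot entry is $B_{r,p_r}=D(p_r)^{-1}\cdot 1\cdot D(p_r)=1$.

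\textbf{Conclusion.} By the uniqueness of the RREF, $RREF(AD)=D'RREF(A)D$. The hypothesis $n>k$ is not actually needed. The only step requiring care is the normalization of pivots to $1$, and that is exactly what fixes the choice of $D'$. There is no genuine obstacle.
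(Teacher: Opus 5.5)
Your proposal is correct and follows essentially the same argument as the paper: you make the same choice $D'=\diag\bigl(D(p_1)^{-1},\dots,D(p_k)^{-1}\bigr)$, get row-equivalence from the invertible matrix $D'S$, and conclude by uniqueness of the RREF. Your explicit check of the four RREF conditions (in particular that each pivot entry becomes $D(p_r)^{-1}\cdot 1\cdot D(p_r)=1$) spells out what the paper compresses into the remark that the pivot columns of $SA$ remain unchanged.
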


\begin{proof}
Let $S \in GL_k(\mathbb{F}_q)$ be such that $RREF(A) = S A$. Now consider the matrix $D'\in \mathbb{F}_q^{*k \times k}$ such that $D'(i) = 1/D(p_i)$ for all $i \in \{1,\ldots, k\}$ where $\{p_1,\ldots, p_k\}$ are the indices of the pivot columns of $SA$. Note that there are indeed $k$ pivot columns since $A$ has full rank. Notice then that $D' S A D$ is also in RREF because the pivot columns of $SA$ remain unchanged after multiplying its columns by the elements of $D$ and rows by the elements of $D'$. But then $D'S$ is a non-singular matrix such that $D' S A D$ has reduced row echelon form, and so $D' S A D = RREF(AD)$ by uniqueness. We conclude that $RREF(AD) = D'RREF(A)D$.

\end{proof}

\begin{corollary}\label{cor:Zeros}
Let $A \in \mathbb{F}_q^{k \times n}$ be a full-rank matrix with $n > k$ and $D \in \mathbb{F}_q^{*n \times n}$ be a diagonal matrix. Then $RREF(AD)$ and $RREF(A)$ have $0$'s at the same indices.
\end{corollary}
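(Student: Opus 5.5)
This follows almost immediately from Lemma~\ref{lem:DRREFD}, so the plan is a short entrywise computation. By that lemma there is a diagonal matrix $D'\in\mathbb{F}_q^{*k\times k}$ with
\[
RREF(AD) = D'\,RREF(A)\,D .
\]
Left multiplication by a diagonal matrix scales rows, and right multiplication by a diagonal matrix scales columns. Hence, writing $R := RREF(A)$, the $(i,j)$ entry of $RREF(AD)$ is $D'(i)\,R_{i,j}\,D(j)$.

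The key point is that both diagonal matrices have all diagonal entries in $\mathbb{F}_q^{*}$. This holds for $D$ by hypothesis. For $D'$ it holds by the construction in the proof of Lemma~\ref{lem:DRREFD}, namely $D'(i)=1/D(p_i)$. Therefore $D'(i)D(j)\neq 0$ for every pair $(i,j)$. It follows that $D'(i)R_{i,j}D(j)=0$ if and only if $R_{i,j}=0$, since $\mathbb{F}_q$ has no zero divisors.

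So the zero patterns of $RREF(AD)$ and $RREF(A)$ coincide entrywise, which is the claim. There is no real obstacle here. The only point to check is that the $D'$ supplied by Lemma~\ref{lem:DRREFD} is invertible, that is, has nonzero diagonal. The lemma's statement already places $D'$ in $\mathbb{F}_q^{*k\times k}$, which guarantees this.

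Both matrices are $k\times n$, because $A$ has full rank $k$ and $AD$ has the same rank. Hence the index sets being compared are the same.
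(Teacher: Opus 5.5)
Your proof is correct and follows the paper's own argument. Both rely on Lemma~\ref{lem:DRREFD} and on the fact that $D'$ and $D$ have nonzero diagonal entries, so each entry of $RREF(AD)$ is a nonzero multiple of the corresponding entry of $RREF(A)$.
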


\begin{proof}
Simply notice that, in the proof of Lemma \ref{lem:DRREFD}, the diagonal elements of $D'$ and $D$ are non-zero. Therefore, each element of $RREF(A)$ differs from its corresponding element of $RREF(AD)$ only by multiplication by a non-zero scalar.

\end{proof}

We can now construct an efficient algorithm for calculating a diagonal matrix $D$ that, with our obtained permutation matrix $P$, composes the linear isometry between our two linear codes. We start by dealing with the special case where the non-pivot columns of $RREF(F)$ have connected support. We formally define this property below: 

\begin{definition}[Connected supports]
We say that column vectors $C_1,\ldots,C_n$ have connected support if there is an ordering $i_1,\ldots,i_n$ of $\{1,\ldots,n\}$ such that if $\mathcal{S}_i$ is the support of $C_i$, we have
$$
\forall 1\leq j\leq n,\ \ \mathcal{S}_{i_j}\cap \bigcup_{0<\ell<j}\mathcal{S}_{i_\ell} \neq \emptyset.
$$
We call $\{i_1,\ldots,i_n\}$ a connected ordering of the columns $C_i$.
\end{definition}

\begin{algorithm}[ht]
\caption{Computation of the diagonal coefficients (connected case).}
\begin{algorithmic}[1]\label{alg:multipliers}
  \REQUIRE Let $F,G\in \F_q^{k\times n}$ be generator matrices of $[n,k,d]_q$ codes such that the non-pivot\\ columns of $RREF(F)$ have connected support, and a permutation matrix $P$ such that\\ there are $S\in GL_k(\F_q)$, and a diagonal matrix $D$ with $F = SGPD$.
  \ENSURE $S\in GL_k(\F_q)$ and a diagonal matrix $D$ with $F = SGPD$.
\STATE Compute $S_1,S_2 \in GL_k(\mathbb{F})$ with $S_1GP = RREF(GP)$ and $S_2F = RREF(F)$.
\STATE $\{i_1,\ldots,i_{n-k}\}\leftarrow$ a connected ordering of the non-pivot columns of $RREF(F)$.
\STATE $D'\leftarrow I_n$, $\mathcal{A}\leftarrow \{\}$.
\FOR{$\ell \in \{i_1,\ldots,i_{n-k}\}$}
\STATE $c\leftarrow$ $\ell$-th column of $RREF(GP)$. $\tilde{c}\leftarrow$ $\ell$-th column of $RREF(F)$. 
\STATE $\mathcal{S}\leftarrow$ support of $c$ (and hence of $\tilde{c}$).
\IF{$\mathcal{A}=\emptyset$}
\STATE $\lambda\leftarrow 1$.
\ELSE
\STATE Let $s\in\mathcal{S}\cap\mathcal{A}$. $\lambda\leftarrow D'_{s,s} c_s/\tilde{c}_s$.
\ENDIF
\FOR{$i\in\mathcal{S}\setminus\mathcal{A}$}
\STATE $D'_{i,i}\leftarrow \lambda \tilde{c}_i/c_i$, $\mathcal{A}\leftarrow\mathcal{A}\cup\{i\}$
\ENDFOR
\ENDFOR
\STATE Let $c'^1,\ldots,c'^n$ be the columns of $D'RREF(GP)$.
\STATE Let $\tilde{c}^1,\ldots,\tilde{c}^n$ be the columns of $RREF(F)$.
\STATE $D\leftarrow I_n$.
\FOR{$i\leq n$}
\STATE Let $j$ be an index satisfying $c'^i_j\neq 0$.
%\IF{there is an index $j$ with $c'^i_j\neq 0$}
\STATE $D_{i,i}\leftarrow \tilde{c}^i_j/c'^i_j$.
%\ENDIF
\ENDFOR
\STATE $S\leftarrow S_2^{-1}D'S_1$.
\RETURN $S,D$
\end{algorithmic}
\end{algorithm}

\begin{lemma}

Let $F,G\in \F_q^{k\times n}$ be generator matrices of $[n,k]_q$-linear codes where the non-pivot columns of $RREF(F)$ have connected support, and such that there exist a non-singular matrix $S \in GL_k(\mathbb{F}_q)$, and permutation matrix $P \in \{0,1\}^{n \times n}$ such that $SGPD = F$ for some diagonal matrix $D \in \mathbb{F}_q^{*n\times n}$. Given $F,G$, and $P$, Algorithm~\ref{alg:multipliers} computes $S,D$ such that $F = SGPD$ in polynomial time.
\end{lemma}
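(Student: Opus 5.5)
The plan is to reduce everything to a comparison of the two reduced row echelon forms. Set $A := GP$, so that $SAD = F$. Since the codes have dimension $k$ and $n>k$, $A$ has full rank, and Lemma~\ref{lem:DRREFD} applies to $A$ and $D$. Because $F$ and $AD$ are row-equivalent, $RREF(F) = RREF(AD)$, and hence
\[
RREF(F) = D_0\, RREF(GP)\, D
\]
for some diagonal $D_0 \in \F_q^{*k\times k}$. Corollary~\ref{cor:Zeros} then shows that $RREF(GP)$ and $RREF(F)$ have the same support pattern. In particular they share the same pivot columns $p_1,\ldots,p_k$, and the $\ell$-th columns $c,\tilde c$ have the same support $\mathcal S$, which justifies line~6 of Algorithm~\ref{alg:multipliers}. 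So the whole problem becomes: recover \emph{some} pair of nonsingular diagonals $(D',\hat D)$ with $RREF(F) = D'\,RREF(GP)\,\hat D$.

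Next I would analyse the loop over non-pivot columns. For a non-pivot column $\ell$ the identity above reads $\tilde c_i = D_0(i)\, c_i\, D(\ell)$ for every $i\in\mathcal S_\ell$. Hence the ratios $\tilde c_i/c_i$, for $i\in\mathcal S_\ell$, all equal $D_0(i)$ times one common unknown factor $D(\ell)$.

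The invariant I would prove by induction along the connected ordering is the following: there is a constant $\mu\in\F_q^*$ such that $D'_{i,i} = \mu D_0(i)$ for all $i\in\mathcal A$.
\begin{itemize}
\item \emph{Base case.} For the first column, taking $\lambda=1$ gives $D'_{i,i} = D_0(i)D(\ell)$, so the invariant holds with $\mu = D(\ell)$.
\item \emph{Inductive step.} Connectivity guarantees some $s\in\mathcal S\cap\mathcal A$. At such $s$, $\lambda = D'_{s,s}c_s/\tilde c_s = \mu D_0(s)/(D_0(s)D(\ell)) = \mu/D(\ell)$. Every newly assigned entry is then $D'_{i,i} = \lambda\tilde c_i/c_i = \mu D_0(i)$, which preserves the invariant.
\end{itemize}
All quantities are nonzero because we only work on supports.

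The step I expect to be the main obstacle is checking that $\mathcal A$ ends up equal to all of $\{1,\ldots,k\}$. Otherwise some rows of $D'$ remain at the default value $1$, and the invariant would fail there. If a row $r$ of $RREF(F)$ had zeros in every non-pivot column, it would be the unit vector $e_{p_r}$, so the code would contain a weight-one word. I would handle this in one of two ways. One option is to read the hypothesis ``connected support'' as covering all $k$ coordinates, i.e.\ the union of the non-pivot supports is $\{1,\ldots,k\}$. The other is to treat such rows separately: for them the relation only involves $D_0(r)D(p_r)$, so setting $D'_{r,r}=1$ is harmless, because the final loop absorbs the discrepancy into the corresponding $D_{p_r,p_r}$.

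With this in place, $D'$ equals $\mu D_0$, possibly modified on such isolated rows. Consequently $RREF(F) = D'\,RREF(GP)\,\hat D$ for the diagonal $\hat D := \mu^{-1} D$, again adjusted on isolated pivot columns. The final loop recovers $\hat D_{i,i}$ from any nonzero entry of column $i$ of $D'RREF(GP)$. Every column is nonzero, since pivot columns are unit vectors and non-pivot columns are nonzero by connectivity, so each $\hat D_{i,i}$ is well defined and nonzero.

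It then follows that $S_2 F = D' S_1 GP \hat D$, i.e.\ $F = (S_2^{-1}D'S_1)\,GP\,\hat D$, with $S := S_2^{-1}D'S_1$ invertible. For complexity, computing the two RREFs and a connected ordering (greedily, in $O(n^2k)$) together with the two loops is clearly polynomial.
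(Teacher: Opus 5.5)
Your proposal is correct and follows the paper's proof essentially step by step. Lemma~\ref{lem:DRREFD}, with Corollary~\ref{cor:Zeros} for the common support pattern, gives $RREF(F)=D_0\,RREF(GP)\,D$; induction along the connected ordering shows the computed $D'$ is a fixed nonzero multiple of $D_0$ on $\mathcal A$; and $S=S_2^{-1}D'S_1$ closes the argument. Your separate treatment of rows outside $\mathcal A$ is a correct refinement of a point the paper passes over: these are unit rows whose only nonzero entry lies in a pivot column, so the final column-by-column loop absorbs any value of $D'_{r,r}$, whereas the paper tacitly uses $D'=\lambda_0 D'_{\text{real}}$ on all of $\{1,\ldots,k\}$.
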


\begin{proof}

% TODO
First observe by Lemma \ref{lem:DRREFD} that
\begin{equation}
RREF(F) = RREF(SGPD) = RREF(GPD) = D'RREF(GP)D\label{eq:6.1}\tag{1}
\end{equation}
\noindent for some diagonal matrix $D' \in \mathbb{F}_q^{*k \times k}$.

The matrix $D' = \diag(D'_{i,i})$ has the property that for all $j$, if $c$ is the $j$-th column of $RREF(GP)$, and $\tilde{c}$ is the $j$-th column of $RREF(F)$, then $D'c\sim \tilde{c}$ (more specifically, $D'c = D_{j,j}\tilde{c}$). Let $\mathcal{S} = \{i_1,\ldots,i_s\}$ be the support of $c$ (and hence $\tilde{c}$ since they have same support). Then for any vector $(x_{i_1},\ldots,x_{i_s})$ we have  
\begin{align*}
\begin{pmatrix}
    x_{i_1}\\
    \vdots \\
    x_{i_{s}}
\end{pmatrix}\sim 
\begin{pmatrix}
    \tilde{c}_{i_1}/c_{i_1}\\
    \vdots\\
    \tilde{c}_{i_s}/c_{i_s}
\end{pmatrix}
&\Leftrightarrow
\exists\lambda\in\F_q^*\ \ 
\left\{
\begin{matrix}
    x_{i_1}c_{i_1} = \lambda\tilde{c}_{i_1} = c_{i_1}\lambda/D_{j,j} D'_{i_1,i_1}\\
    \vdots\\
   x_{i_s}c_{i_s} = \lambda\tilde{c}_{i_s} = c_{i_s}\lambda/D_{j,j} D'_{i_s,i_s} 
\end{matrix}\right.\\
&\Leftrightarrow
\begin{pmatrix}
    x_{i_1}\\
    \vdots \\
    x_{i_{s}}
\end{pmatrix}\sim 
\begin{pmatrix}
    D'_{i_1,i_1}\\
    \vdots\\
    D'_{i_s,i_s}
\end{pmatrix}
\end{align*}

Now, let us show the loop invariant of Steps~4-15. Let $D'_{\text{real}}$ be the diagonal matrix that satisfies $RREF(F) = D'_{\text{real}}RREF(GP)D$, and let $\mathcal{A} = \{i_1,\cdots i_a\}$. Then 
$$\begin{pmatrix}
    D'_{i_1,i_1}\\
    \vdots\\
    D'_{i_a,i_a}
\end{pmatrix}
\sim
\begin{pmatrix}
    D'_{\text{real},i_1,i_1}\\
    \vdots\\
    D'_{\text{real},i_a,i_a}
\end{pmatrix}.$$
At the beginning of the first iteration of the loop, $\mathcal{A}=\emptyset$. Hence $\lambda=1$, and at the end of the iteration, we have 
$$
\begin{pmatrix}
 D'_{i_1}\\
 \vdots\\
 D'_{i_a}
\end{pmatrix}
= 
\begin{pmatrix}
 \tilde{c}_{i_1}/c_{i_1}\\
 \vdots\\
 \tilde{c}_{i_a}/c_{i_a}
\end{pmatrix}
\sim
\begin{pmatrix}
     D'_{\text{real},i_1,i_1}\\
    \vdots\\
    D'_{\text{real},i_a,i_a}   
\end{pmatrix}.
$$
Now assume that we enter an iteration, and that at the beginning of the loop, the invariant is satisfied. Let us show that the loop invariant is still satisfied at the end of the loop. In Step~5, there exists $\lambda_0\in\F_q^*$ such that 
$$
\begin{pmatrix}
 D'_{i_1,i_1}\\
 \vdots\\
 D'_{i_a,i_a}
\end{pmatrix}
= 
\lambda_0
\begin{pmatrix}
     D'_{\text{real},i_1,i_1}\\
    \vdots\\
    D'_{\text{real},i_a,i_a}   
\end{pmatrix}.
$$
 For the index $s$ chosen in Step~10, we have $D'_{\text{real},s}c_s = D_{\ell,\ell}\tilde{c}_s$ (since we know that $D'_{\text{real}}c = D_{\ell,\ell}\tilde{c}$). Note that we know that $\mathcal{S}\cap\mathcal{A}\neq \emptyset$ since $RREF(F)$ has non-pivot columns of connected support. For $i\in\mathcal{S}\setminus\mathcal{A}$, we have $c_s/\tilde{c}_s = D_{\ell,\ell}/D'_{\text{real},s,s}$, hence: 
$$
\lambda = D'_{s,s} c_s/\tilde{c}_s = \lambda_0 D'_{\text{real},s,s}c_s/\tilde{c}_s = \lambda_0 D_{\ell,\ell}.
$$
Then we set $D'_{i,i}$ to 
$$
D'_{i,i} = \lambda\tilde{c}_i/c_i = \lambda D'_{\text{real},i,i}/D_{\ell,\ell} = \lambda_0 D'_{\text{real},i,i}.
$$
Hence, if $\mathcal{S}\setminus\mathcal{A} = \{j_1,\ldots,j_s\}$, at the end of the loop, we have 
$$
\begin{pmatrix}
 D'_{i_1,i_1}\\
 \vdots\\
 D'_{i_a,i_1}\\
 D'_{j_1,j_1}\\
 \vdots\\
 D'_{j_s,j_s}
\end{pmatrix}
= 
\lambda_0
\begin{pmatrix}
     D'_{\text{real},i_1,i_1}\\
    \vdots\\
    D'_{\text{real},i_a,i_a}\\
    D'_{\text{real}j_1,j_1}\\
    \vdots\\
    D'_{\text{real}j_s,j_s}
\end{pmatrix}.
$$
This resulting diagonal matrix $D'$ is such the $i$-th column of $D'RREF(GP)$ is proportional to the $i$-th column of $RREF(F)$. Notice that the pivot columns remain proportional despite multiplication by $D'$. Then Steps~19-22 successfully recover a multiple of the actual $D_{i,i}$ for all $i$. Indeed, $c'^i = D'c^i = \lambda_0 D'_{\text{real}}c^i = \lambda_0 D_{i,i} \tilde{c}^i$ (note that we have excluded zero columns by restricting ourselves to non-pivots of connected support).

We will now show that this multiple of $D$ is an acceptable diagonal matrix for the resolution of our instance of search-LCE. Notice that we now have $D'RREF(GP)D = RREF(F)$. That is, $D'S_1GPD = S_2F$, and so $S_2^{-1}D'S_1GPD = F$. But $S = S_2^{-1}D'S_1$ is a product of non-singular matrices, and is thus non-singular. Therefore, $D$ is such that $SGPD = F$, as desired.

The run time of the algorithm is polynomial in the size of the input. 
\end{proof}

In the case where the non-pivot columns of $RREF(F)$ do not have a connected support the codes generated by $G$ and $F$ are decomposable i.e. they are permutationally equivalent to a direct sum 
\begin{align*}
  \mathcal{C}&\simeq \mathcal{C}_1\oplus\ldots\oplus \mathcal{C}_m\\
  \mathcal{D}&\simeq \mathcal{D}_1\oplus\ldots\oplus \mathcal{D}_m
\end{align*}
for some $m>2$ and indecomposable codes $(\mathcal{C}_i)_{i\leq m}, (\mathcal{D}_i)_{i\leq m}$ satisfying $\mathcal{C}_i\stackrel{L}{\approx}\mathcal{D}_i$ for all $i\leq m$. Standard methods such as~\cite[Alg. 1]{BHJ26} allow the computation of the corresponding permutation and irreducible components in polynomial time. This means that we get permutation matrices $P_\mathcal{C},P_\mathcal{D}$,  invertible matrices $S_\mathcal{C},S_\mathcal{D}$, and generator matrices $G',F'$ of the form 
$$
G' = \begin{pmatrix}
    G^{(1)} & (0) &  \\
        & \ddots & \\
        &    (0)    & G^{(r)}
\end{pmatrix},\qquad 
F' = \begin{pmatrix}
    F^{(1)} & (0) &  \\
        & \ddots & \\
        &    (0)    & F^{(r)}
\end{pmatrix}
$$
such that the $G^{(i)},F^{(i)}$ generate indecomposable codes (in particular: their information sets have connected support), and $G' = S_\mathcal{C} G P_\mathcal{C}$, $F'=S_\mathcal{D} F P_\mathcal{D}$. With the decisional LCE oracle, we can test which $G^{(i)}$ generate a code linearly equivalent to the code generated by an $F^{(j)}$. Hence, after $r^2$ calls to the LCE oracle, we can assume that there exist invertible matrices $(S^{(i)})_{i\leq r}$, permutation matrices $(P^{(i)})_{i\leq r}$ and diagonal matrices $(D^{(i)})$ such that 
$$
F^{(i)} = S^{(i)}G^{(i)}P^{(i)}D^{(i)}
$$
Each $P^{(i)}$ can be recovered using Algorithm~\ref{alg:perm-to-LCE} while we get the $D^{(i)}$ from Algorithm~\ref{alg:multipliers}. This means that we have
\begin{align*}
F &= S^{-1}_\mathcal{D} F' P^{-1}_\mathcal{D} \\
&= S^{-1}_\mathcal{D} \diag(S^{(i)}) G' \diag(P^{(i)}) \diag(D^{(i)}) P^{-1}_\mathcal{D}\\
&=  \underbrace{S^{-1}_\mathcal{D} \diag(S^{(i)}) S_\mathcal{C}}_S G \underbrace{P_{\mathcal{C}} \diag(P^{(i)})}_P\underbrace{ \diag(D^{(i)})P^{-1}_\mathcal{D}}_D,
\end{align*}
where $S$ is invertible, $P$ is a permutation matrix, and $D$ is a diagonal matrix.

\begin{algorithm}[ht]
\caption{Search to Decision reduction for LCE (general case).}
\begin{algorithmic}[1]\label{alg:multipliers-generic}
  \REQUIRE Generator matrices $F,G$ of $[n,k,d]_q$ codes such that there are $S\in GL_k(\F_q)$, \\
  a permutation matrix $P$, and a diagonal matrix $D$ with $F = SGPD$.
  \ENSURE A permutation matrix $P$, a diagonal matrix $D$ and $S\in GL_k(\F_q)$ such that $F = SGPD$.
\STATE Get permutation matrix $P_\mathcal{C}$ and invertible matrix $S_\mathcal{C}$ such that $S_\mathcal{C}GP_\mathcal{C} = \diag(G^{(i)})$\\ where the $(G^{(i)})_{i\leq m}$ generate irreducible codes $(\mathcal{C}_i)_{i\leq m}$.
\STATE Get permutation matrix $P_\mathcal{D}$ and invertible matrix $S_\mathcal{D}$ such that $S_\mathcal{D}FP_\mathcal{D} = \diag(F^{(i)})$\\ where the $(F^{(i)})_{i\leq m}$ generate irreducible codes $(\mathcal{D}_i)_{i\leq m}$.
\FOR{$i,j\leq m$}
\STATE Test if $\mathcal{C}_i\stackrel{L}{\approx}\mathcal{D}_j$.
\ENDFOR
\STATE Rearrange $P_\mathcal{D}$ to ensure that $\forall i\leq m$, $\mathcal{C}_i\stackrel{L}{\approx}\mathcal{D}_i$.
\FOR{$i\leq m$}
\STATE Use Algorithms~\ref{alg:perm-to-LCE} and~\ref{alg:multipliers} to find invertible matrices $(S^{(i)})_{i\leq m}$, permutation matrices $(P^{(i)})_{i\leq m}$, and diagonal matrices $(D^{(i)})_{i\leq m}$ such that $
F^{(i)} = S^{(i)}G^{(i)}P^{(i)}D^{(i)}
$
\ENDFOR
\STATE $S\leftarrow S^{-1}_\mathcal{D} \diag(S^{(i)}) S_\mathcal{C}$.
\STATE $P\leftarrow P_{\mathcal{C}} \diag(P^{(i)})$.
\STATE $D\leftarrow \diag(D^{(i)})P^{-1}_\mathcal{D}$
\RETURN $S,P,D$
\end{algorithmic}
\end{algorithm}

\begin{thm}\label{thm:diag-coeffs}
    Let $F,G\in \F_q^{k\times n}$ be generator matrices of $[n,k]_q$-linear codes, and such that there exist a non-singular matrix $S \in GL_k(\mathbb{F}_q)$, and permutation matrix $P \in \{0,1\}^{n \times n}$ such that $SGPD = F$ for some diagonal matrix $D \in \mathbb{F}_q^{*n\times n}$. Given $F,G$, Algorithm~\ref{alg:multipliers-generic} computes $S,P,D$ such that $F = SGPD$. in polynomial time.
\end{thm}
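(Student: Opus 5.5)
The proof has three parts: correctness of the decomposition step, correctness of the per-block calls to the two earlier algorithms, and correctness of the reassembly.

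\textbf{Decomposition.} Steps 1--2 invoke~\cite[Alg.~1]{BHJ26}. This gives, in polynomial time, $S_\mathcal{C},P_\mathcal{C}$ with $S_\mathcal{C}GP_\mathcal{C}=\diag(G^{(i)})$ and $S_\mathcal{D},P_\mathcal{D}$ with $S_\mathcal{D}FP_\mathcal{D}=\diag(F^{(i)})$, where each block generates an indecomposable code. I then need the key structural fact: a linear isometry maps indecomposable components onto indecomposable components. I would argue it by uniqueness of the decomposition of a code into indecomposable summands up to reordering of coordinate blocks. The argument: the monomial map $\tau$ sends the direct sum decomposition of $\mathcal C$ to a decomposition of $\mathcal D$. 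The supports of the components of $\mathcal D$ are the connected components of the relation ``coordinates $i,j$ lie in a common minimal-support codeword''. That relation is invariant under monomial maps. Consequently both sides have the same number $m$ of components, and there is a bijection $\rho$ with $\mathcal C_i\stackrel{L}{\approx}\mathcal D_{\rho(i)}$.

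The $m^2$ oracle calls of Steps 3--5 reveal a bipartite compatibility graph, and this graph contains a perfect matching. Finding a perfect matching takes polynomial time; any such matching works, since each matched pair is linearly equivalent. Rearranging $P_\mathcal{D}$ and $S_\mathcal{D}$ accordingly is just a block permutation, so Step 6 is justified. I plan to replace the naive ``rearrange'' by this matching argument.

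\textbf{Per-block correctness.} For each $i$, Corollary~\ref{cor:poly_perm_lce} gives $P^{(i)}$ with at most $n_i^2$ LCE calls, where $n_i$ is the block length. I would then apply the preceding lemma on Algorithm~\ref{alg:multipliers}, and this is where I expect the main obstacle. That lemma requires the non-pivot columns of $RREF(F^{(i)})$ to have connected support. It also implicitly requires that no column of $RREF$ is zero, and that every coordinate of $\{1,\ldots,k_i\}$ lies in the support of some non-pivot column, so that every $D'$ entry gets assigned.

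I would derive this from indecomposability. Write $RREF(F^{(i)})=(I\mid A)$ up to column order. Form the bipartite graph on rows and non-pivot columns, with an edge when $A_{r,c}\neq 0$. If this graph were disconnected, or if a row had no nonzero entry in $A$, then splitting rows and columns along its components would exhibit $\mathcal D_i$ as a nontrivial direct sum, contradicting indecomposability. A zero column would likewise be a trivial direct summand; I would treat zero columns separately, setting $D_{j,j}=1$, or absorb them into the decomposition. A connected bipartite graph covering all rows yields a connected ordering of the non-pivot columns whose supports cover $\{1,\ldots,k_i\}$. Hence Algorithm~\ref{alg:multipliers} returns $S^{(i)},D^{(i)}$ with $F^{(i)}=S^{(i)}G^{(i)}P^{(i)}D^{(i)}$.

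\textbf{Reassembly and complexity.} The final identity is the displayed block computation preceding Algorithm~\ref{alg:multipliers-generic}. Block-diagonal matrices of invertible, permutation, or diagonal matrices are again of that type, so $S$ is invertible and $P$ is a permutation. I would also check that the trailing factor $\diag(D^{(i)})P_\mathcal{D}^{-1}$ is to be read as the monomial matrix $\diag(P^{(i)}D^{(i)})P_\mathcal{D}^{-1}$. I would rewrite it as $P'D''$ with $D''=P_\mathcal{D}\diag(D^{(i)})P_\mathcal{D}^{-1}$ diagonal; this is the only bookkeeping correction I expect to make. The running time is polynomial: the decomposition is polynomial, there are $m^2\le n^2$ tests, the calls cost $\sum n_i^2\le n^2$ in total, and each call to Algorithm~\ref{alg:multipliers} runs in polynomial time.
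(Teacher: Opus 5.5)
Your proposal is correct and follows the same route as the paper: decompose both codes with the algorithm of \cite{BHJ26}, match components using $m^2$ LCE queries, run Algorithms~\ref{alg:perm-to-LCE} and~\ref{alg:multipliers} on each block, and reassemble. The points you add are ones the paper asserts without proof: that monomial maps send indecomposable components onto indecomposable components, that any perfect matching of the compatibility graph suffices, and that indecomposability forces the non-pivot columns of the RREF to have connected, row-covering support. Your bookkeeping fix is genuinely needed, since the paper's $\diag(D^{(i)})P_\mathcal{D}^{-1}$ is monomial rather than diagonal; the correct factors are $P=P_\mathcal{C}\diag(P^{(i)})P_\mathcal{D}^{-1}$ and $D=P_\mathcal{D}\diag(D^{(i)})P_\mathcal{D}^{-1}$.
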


Together with Corollary \ref{cor:poly_perm_lce}, Theorem \ref{thm:diag-coeffs} completes the polynomial-time search-to-decision reduction for LCE: given access to a decisional LCE oracle, one can recover a full linear isometry $\tau = (\pi, \nu)$ between any two linearly equivalent codes in polynomial time. This places search-LCE and decisional LCE in the same complexity class under polynomial-time reductions.

%% file: ResearchOutline.bib
@INPROCEEDINGS{Reduction,
  author={Biasse, Jean-François and Micheli, Giacomo},
  booktitle={2023 IEEE International Symposium on Information Theory (ISIT)}, 
  title={A Search-to-Decision Reduction for the Permutation Code Equivalence Problem}, 
  year={2023},
  volume={},
  number={},
  pages={602-607},
  doi={10.1109/ISIT54713.2023.10206940}}

@article{mceliece1978public,
	title={A public-key cryptosystem based on algebraic coding theory},
	author={R. McEliece},
	journal={DSN progress report},
	volume={42},
	number={44},
	pages={114--116},
	year={1978}
}

@InProceedings{Girault,
author="M. Girault",
editor="J. Seberry
and J. Pieprzyk",
title="A (non-practical) three-pass identification protocol using coding theory",
booktitle="Advances in Cryptology --- AUSCRYPT '90",
year="1990",
publisher="Springer Berlin Heidelberg",
address="Berlin, Heidelberg",
pages="265--272",
isbn="978-3-540-46297-2"
}

@InProceedings{Sendrier-Simos,
author="N. Sendrier 
and D. Simos",
editor="P. Gaborit",
title="The Hardness of Code Equivalence over $\mathbb{F}_q$ and Its Application to Code-Based Cryptography",
booktitle="Post-Quantum Cryptography",
year="2013",
publisher="Springer Berlin Heidelberg",
address="Berlin, Heidelberg",
pages="203--216",
isbn="978-3-642-38616-9"
}

@inproceedings{LESS,
  author    = {J.{-}F. Biasse and
               G. Micheli and
               E. Persichetti and
               P. Santini},
  editor    = {A. Nitaj and
               A. Youssef},
  title     = {{LESS} is More: Code-Based Signatures Without Syndromes},
  booktitle = {Progress in Cryptology - {AFRICACRYPT} 2020 - 12th International Conference
               on Cryptology in Africa, Cairo, Egypt, July 20-22, 2020, Proceedings},
  series    = {Lecture Notes in Computer Science},
  volume    = {12174},
  pages     = {45--65},
  publisher = {Springer},
  year      = {2020},
  url       = {https://doi.org/10.1007/978-3-030-51938-4\_3},
  doi       = {10.1007/978-3-030-51938-4\_3},
  bibsource = {dblp computer science bibliography, https://dblp.org}
}

@article{Couveignes06,
  author    = {J.-M. Couveignes},
  title     = {Hard Homogeneous Spaces},
  journal   = {{IACR} Cryptol. ePrint Arch.},
  pages     = {291},
  year      = {2006},
  url       = {http://eprint.iacr.org/2006/291},
  bibsource = {dblp computer science bibliography, https://dblp.org}
}

@inproceedings{AlamatiFMP20,
  author    = {N. Alamati and
               L. De Feo and
               H. Montgomery and
               S. Patranabis},
  editor    = {S. Moriai and
               H. Wang},
  title     = {Cryptographic Group Actions and Applications},
  booktitle = {Advances in Cryptology - {ASIACRYPT} 2020 - 26th International Conference
               on the Theory and Application of Cryptology and Information Security,
               Daejeon, South Korea, December 7-11, 2020, Proceedings, Part {II}},
  series    = {Lecture Notes in Computer Science},
  volume    = {12492},
  pages     = {411--439},
  publisher = {Springer},
  year      = {2020},
  url       = {https://doi.org/10.1007/978-3-030-64834-3\_14},
  doi       = {10.1007/978-3-030-64834-3\_14},
  bibsource = {dblp computer science bibliography, https://dblp.org}
}

@inproceedings{CSIDH,
  author    = {W. Castryck and
               T. Lange and
               C. Martindale and
               L. Panny and
               J. Renes},
  editor    = {T. Peyrin and
               S. Galbraith},
  title     = {{CSIDH:} An Efficient Post-Quantum Commutative Group Action},
  booktitle = {Advances in Cryptology - {ASIACRYPT} 2018 - 24th International Conference
               on the Theory and Application of Cryptology and Information Security,
               Brisbane, QLD, Australia, December 2-6, 2018, Proceedings, Part {III}},
  series    = {Lecture Notes in Computer Science},
  volume    = {11274},
  pages     = {395--427},
  publisher = {Springer},
  year      = {2018},
  url       = {https://doi.org/10.1007/978-3-030-03332-3\_15},
  doi       = {10.1007/978-3-030-03332-3\_15},
  bibsource = {dblp computer science bibliography, https://dblp.org}
}

@inproceedings{LESS-FM,
  author    = {A. Barenghi and
               J.{-}F. Biasse and
               E. Persichetti and
               P. Santini},
  editor    = {J. H. Cheon and
               J.{-}P. Tillich},
  title     = {{LESS-FM:} Fine-Tuning Signatures from the Code Equivalence Problem},
  booktitle = {Post-Quantum Cryptography - 12th International Workshop, PQCrypto
               2021, Daejeon, South Korea, July 20-22, 2021, Proceedings},
  series    = {Lecture Notes in Computer Science},
  volume    = {12841},
  pages     = {23--43},
  publisher = {Springer},
  year      = {2021},
  url       = {https://doi.org/10.1007/978-3-030-81293-5\_2},
  doi       = {10.1007/978-3-030-81293-5\_2},
  bibsource = {dblp computer science bibliography, https://dblp.org}
}

@article{LESS-advanced,
  author    = {A. Barenghi and
               J.{-}F. Biasse and
               T. Ngo and
               E. Persichetti and
               P. Santini},
  title     = {Advanced signature functionalities from the code equivalence problem},
  journal   = {Int. J. Comput. Math. Comput. Syst. Theory},
  volume    = {7},
  number    = {2},
  pages     = {112--128},
  year      = {2022},
  url       = {https://doi.org/10.1080/23799927.2022.2048206},
  doi       = {10.1080/23799927.2022.2048206},
  bibsource = {dblp computer science bibliography, https://dblp.org}
}

@ARTICLE{Leon1982,
author={J. {Leon}},
journal={IEEE Transactions on Information Theory},
title={Computing automorphism groups of error-correcting codes},
year={1982},
volume={28},
number={3},
pages={496-511},
month={May},}

@inproceedings{Beullens20,
  author    = {W. Beullens},
  editor    = {O. Dunkelman and
               M. Jacobson Jr. and
               C. O'Flynn},
  title     = {Not Enough {LESS:} An Improved Algorithm for Solving Code Equivalence
               Problems over $\mathbb{F}_q$},
  booktitle = {Selected Areas in Cryptography - {SAC} 2020 - 27th International Conference,
               Halifax, NS, Canada (Virtual Event), October 21-23, 2020, Revised
               Selected Papers},
  series    = {Lecture Notes in Computer Science},
  volume    = {12804},
  pages     = {387--403},
  publisher = {Springer},
  year      = {2020},
  url       = {https://doi.org/10.1007/978-3-030-81652-0\_15},
  doi       = {10.1007/978-3-030-81652-0\_15},
  bibsource = {dblp computer science bibliography, https://dblp.org}
}

@article{CEcrypto,
 author    = {A. Barenghi and
               J.{-}F. Biasse and
               E. Persichetti and
               P. Santini},
journal={Advances in Mathematics of Communications},
title={On the computational hardness of the code equivalence problem in cryptography},
year={2022},
volume={17},
number={1},
pages={23-55}
}

@article{Sendrier-SSA,
  author    = {N. Sendrier},
  title     = {Finding the permutation between equivalent linear codes: The support
               splitting algorithm},
  journal   = {{IEEE} Trans. Inf. Theory},
  volume    = {46},
  number    = {4},
  pages     = {1193--1203},
  year      = {2000},
  url       = {https://doi.org/10.1109/18.850662},
  doi       = {10.1109/18.850662},
  bibsource = {dblp computer science bibliography, https://dblp.org}
}

@inproceedings{BardetOS19,
  author    = {M. Bardet and
               A. Otmani and
               M. Saeed{-}Taha},
  title     = {Permutation Code Equivalence is Not Harder Than Graph Isomorphism
               When Hulls Are Trivial},
  booktitle = {{IEEE} International Symposium on Information Theory, {ISIT} 2019,
               Paris, France, July 7-12, 2019},
  pages     = {2464--2468},
  publisher = {{IEEE}},
  year      = {2019},
  url       = {https://doi.org/10.1109/ISIT.2019.8849855},
  doi       = {10.1109/ISIT.2019.8849855},
  bibsource = {dblp computer science bibliography, https://dblp.org}
}

@inproceedings{Babai16,
  author    = {L. Babai},
  editor    = {D. Wichs and
               Y. Mansour},
  title     = {Graph isomorphism in quasipolynomial time [extended abstract]},
  booktitle = {Proceedings of the 48th Annual {ACM} {SIGACT} Symposium on Theory
               of Computing, {STOC} 2016, Cambridge, MA, USA, June 18-21, 2016},
  pages     = {684--697},
  publisher = {{ACM}},
  year      = {2016},
  url       = {https://doi.org/10.1145/2897518.2897542},
  doi       = {10.1145/2897518.2897542},
  bibsource = {dblp computer science bibliography, https://dblp.org}
}

@phdthesis{Saheed-Thesis,
  author  = "M. Saeed{-}Taha",
  title   = "Approche Alge\'{e}brique sur l'\'{e}quivalence des codes",
  school  = "Universit\'{e} de {R}ouen {N}ormandie",
  year    = "2017"
}

@article{PetrankR97,
  author    = {E. Petrank and
               R. Roth},
  title     = {Is code equivalence easy to decide?},
  journal   = {{IEEE} Trans. Inf. Theory},
  volume    = {43},
  number    = {5},
  pages     = {1602--1604},
  year      = {1997},
  url       = {https://doi.org/10.1109/18.623157},
  doi       = {10.1109/18.623157},
  bibsource = {dblp computer science bibliography, https://dblp.org}
}

@misc{Reijnders22,
      author = {K. Reijnders and S. Samardjiska and M. Trimoska},
      title = {Hardness estimates of the Code Equivalence Problem in the Rank Metric},
      howpublished = {Cryptology ePrint Archive, Paper 2022/276},
      year = {2022},
      note = {\url{https://eprint.iacr.org/2022/276}},
      url = {https://eprint.iacr.org/2022/276}
}

@misc{Couvreur21,
  doi = {10.48550/ARXIV.2011.04611},
  url = {https://arxiv.org/abs/2011.04611},
  author = {A. Couvreur and T. Debris-Alazard and P. Gaborit},
  title = {On the hardness of code equivalence problems in rank metric},
  publisher = {arXiv},
  year = {2020},
  copyright = {Creative Commons Attribution 4.0 International}
}

@misc{MEDS,
      author = {T. Chou and R. Niederhagen and E. Persichetti and T. Hajatiana Randrianarisoa and K. Reijnders and S. Samardjiska and M. Trimoska},
      title = {Take your {MEDS}: Digital Signatures from Matrix Code Equivalence},
      howpublished = {Cryptology ePrint Archive, Paper 2022/1559},
      year = {2022},
      note = {\url{https://eprint.iacr.org/2022/1559}},
      url = {https://eprint.iacr.org/2022/1559}
}

@book{KST93,
  author    = {J. K{\"{o}}bler and
               U. Sch{\"{o}}ning and
               J. Tor{\'{a}}n},
  title     = {The Graph Isomorphism Problem: Its Structural Complexity},
  series    = {Progress in Theoretical Computer Science},
  publisher = {Birkh{\"{a}}user/Springer},
  year      = {1993},
  url       = {https://doi.org/10.1007/978-1-4612-0333-9},
  doi       = {10.1007/978-1-4612-0333-9},
  isbn      = {978-1-4612-6712-6},
  bibsource = {dblp computer science bibliography, https://dblp.org}
}

@inproceedings{babai2011code,
  author       = {L{\'{a}}szl{\'{o}} Babai and
                  Paolo Codenotti and
                  Joshua A. Grochow and
                  Youming Qiao},
  editor       = {Dana Randall},
  title        = {Code Equivalence and Group Isomorphism},
  booktitle    = {Proceedings of the Twenty-Second Annual {ACM-SIAM} Symposium on Discrete
                  Algorithms, {SODA} 2011, San Francisco, California, USA, January 23-25,
                  2011},
  pages        = {1395--1408},
  publisher    = {{SIAM}},
  year         = {2011},
  url          = {https://doi.org/10.1137/1.9781611973082.107},
  doi          = {10.1137/1.9781611973082.107},
  bibsource    = {dblp computer science bibliography, https://dblp.org}
}

@misc{ISIT25_CE,
      author = {Huck Bennett and Drisana Bhatia and Jean-François Biasse and Medha Durisheti and Lucas LaBuff and Vincenzo Pallozzi Lavorante and Philip Waitkevich},
      title = {Asymptotic improvements to provable algorithms for the code equivalence problem},
      howpublished = {Cryptology {ePrint} Archive, Paper 2025/187},
      year = {2025},
      url = {https://eprint.iacr.org/2025/187}
}

@misc{BHJ26,
      title={Code Equivalence and Automorphism Problems for Codes}, 
      author={Jean-Francois Biasse and Alexandra V. Hostetler and Anuvrat Jaindungarwal},
      year={2026},
      eprint={2609.25483},
      archivePrefix={arXiv},
      primaryClass={cs.CC},
      url={https://arxiv.org/abs/2609.25483}, 
}
